	\def\version{arxiv}
\def\draftmode{false}
\newcommand{\ifarxiv}[2]{\ifthenelse{\equal{\version}{arxiv}}{#1}{#2}}
\newcommand{\ifmanuscript}[2]{\ifthenelse{\equal{\version}{manuscript}}{#1}{#2}}
\newcommand{\ifconf}[2]{\ifthenelse{\equal{\version}{conference}}{#1}{#2}}
\newcommand{\ifdraft}[2]{\ifthenelse{\equal{\draftmode}{true}}{#1}{#2}}
\newcommand\iflipics[2]{\ifthenelse{\equal{\version}{conference} \OR \equal{\version}{XXX}}{#1}{#2}}
\newcommand\ifkoma[2]{\ifthenelse{\equal{\version}{manuscript} \OR \equal{\version}{arxiv}}{#1}{#2}}
\newcommand\plaincenter[1]{%
	\mbox{}\hfill#1\hfill\mbox{}%
}
		\let\mytitle\@title%
		\let\mytitle\myshorttitle
	\newtheorem{fact}[theorem]{Fact}
	\newtheoremstyle{proofstyle}%
	  {\item[\theorem@headerfont\hskip\labelsep ##1\theorem@separator]}%
	  {\item[\theorem@headerfont\hskip\labelsep ##1 of ##3\theorem@separator]}
	\newtheorem{theorem}{Theorem}[section]
	\theoremstyle{plain}
	\newtheorem{lemma}[theorem]{Lemma}
	\newtheorem{corollary}[theorem]{Corollary}
	\newtheorem{definition}[theorem]{Definition}
	\theoremstyle{plain}
	\theoremstyle{proofstyle}
	\newtheorem{proof}{Proof}
\newcommand*\ie{\mbox{i.\hspace{.2ex}e.}}
\newcommand*\eg{\mbox{e.\hspace{.2ex}g.}}
\newcommand*\Withoutlossofgenerality{\mbox{W.\hspace{.23ex}l.\hspace{.2ex}o.\hspace{.17ex}g.}\xspace}
\newcommand\Oh{O}
\def\.{\mskip1mu}
\newcommand{\surroundedmath}[3]{
	\mathchoice{
		#1{#2{#3}#2}%
	}{
		#1{#3}%
	}{
		#1{#3}%
	}{
		#1{#3}%
	}%
}
\newcommand\wwrel[1]{\surroundedmath{\mathrel}{\;\;}{#1}}
\newcommand{\relwithtext}[3][c]{%
	\mathrel{\underset{\mathclap{\makebox[\widthof{$=$}][#1]{\scriptsize#2}}}{#3}}%
}
\newcommand{\ESymbol}{\mathbb{E}}
\newcommand{\ProbSymbol}{\ensuremath{\mathbb{P}}}
\DeclarePairedDelimiterXPP\Prob[1]{\ProbSymbol}[]{}{%
	#1%
}
\DeclarePairedDelimiterXPP\E[1]{\ESymbol}[]{}{%
	#1%
}
\DeclarePairedDelimiterXPP\Eover[2]{\ESymbol_{#1}}[]{}{%
	#2%
}
\providecommand{\Prob}{} 
\providecommand{\E}{} 
\providecommand{\Eover}{} 
\let\oldalign\align
\let\endoldalign\endalign
\newcommand*\numberthis[1][]{\stepcounter{equation}\tag{\theequation}}
\newcommand\splitaftercomma[1]{%
  \begingroup
  \begingroup\lccode`~=`, \lowercase{\endgroup
    \edef~{\mathchar\the\mathcode`, \penalty0 \noexpand\hspace{0pt plus .25em}}%
  }\mathcode`,="8000 #1%
  \endgroup
}
\def\mydots{\xleaders\hbox to.5em{\hfill.\hfill}\hfill}
\newlength\tmpLenNotations
	\let\oldparagraph\paragraph
	\renewcommand\paragraph[1]{%
		\oldparagraph*{#1}
	}
	\let\oldparagraph\paragraph
	\renewcommand\paragraph[1]{%
		\oldparagraph{#1.}
	}
\let\oldthebibliography\thebibliography
\renewcommand\thebibliography[1]{%
	\oldthebibliography{#1}%
	\pdfbookmark[1]{References}{}%
}
	\definecolor{refkey}{gray}{.99}
	\colorlet{labelkey}{green!60!black!60}
\let\epsilon\varepsilon
\newsavebox\tmpbox
	\title{%
		Dynamic Optimality Refuted~-- \\\protect For Tournament Heaps
	}
	\def\myshorttitle{Dynamic Optimality Refuted~-- For Tournament Heaps}
	\author{
		J.\ Ian Munro%
		\footnote{%
			University of Waterloo, Canada, \texttt{imunro\,@\,uwaterloo.ca}
		}
	\and
		Richard Peng%
		\footnote{%
			Georgia Institute of Technology / Microsoft Research Redmond, USA, \texttt{richard.peng\,@\,gmail.com},
			part of this work was done while at the University of Waterloo.
		}
	\and
		Sebastian Wild%
		\footnote{%
			University of Waterloo, Canada, \texttt{wild\,@\,uwaterloo.ca}
		}
	\and
		Lingyi Zhang%
		\footnote{%
			University of Waterloo, Canada, \texttt{l367zhang\,@\,edu.uwaterloo.ca}
		}
	}
	\date{\small\today}
\tikzset{
	every node/.append style={inner sep=1pt},
	big treenode/.style={circle,draw,minimum size=35pt,fill=black!10},
	treenode/.style={circle,draw,minimum size=30pt,fill=black!10},
	empty/.style = {minimum size = 20pt},
	inner/.style={circle,draw,minimum size=20pt,fill=black!10},
	einner/.style={circle,draw,minimum size=10pt},
	ileaf/.style={rectangle,draw,minimum size=10pt},
	leaf/.style={rectangle,draw,minimum size=15pt,fill=black!10},
	subtree/.style={
		draw,
		isosceles triangle,
		shape border rotate=90,
		minimum width=20pt
	},
}
\begin{document}

\maketitle

\vspace{-1ex}

\begin{abstract}%
We prove a separation between offline and online algorithms
for finger-based tournament heaps undergoing key modifications.
These heaps are implemented by binary trees with keys stored
on leaves, and intermediate nodes tracking the min of
their respective subtrees.
They represent a natural starting point for studying self-adjusting
heaps due to the need to access the root-to-leaf path
upon modifications.
We combine previous studies on the competitive ratios of
unordered binary search trees by [Fredman WADS2011]
and on order-by-next request
by [Martínez-Roura TCS2000] and [Munro ESA2000] to show that
for any number of fingers, tournament heaps cannot handle
a sequence of modify-key operations with competitive ratio in $o(\sqrt{\log{n}})$.

Critical to this analysis is the characterization of the modifications
that a heap can undergo upon an access.
There are $\exp(\Theta(n \log{n}))$ valid heaps on $n$ keys, but
only $\exp(\Theta(n))$ binary search trees.
We parameterize the modification power
through the well-studied concept of fingers:
additional pointers the data structure can manipulate arbitrarily.
Here we demonstrate that fingers can be significantly more powerful
than servers moving on a static tree by
showing that access to $k$ fingers allow an offline algorithm to handle 
any access sequence with amortized cost
$O(\log_{k}(n) + 2^{\lg^{*}n})$.
\end{abstract}

\section{Introduction}
\label{sec:intro}

One of the most intriguing open questions in data structures is the 
\textit{dynamic-optimality conjecture}.
The conjecture states that splay trees can serve any sequence of operations with
at most a constant times the cost of the best (adaptive) binary-search-tree (BST) based method,
even if we allow the latter to know the sequence of accesses in advance 
(\ie, to work in ``offline'' mode).
Despite decades of active research and deep results~\cite{Wilber1989,%
	Munro2000,DemaineHarmonIaconoPatrascu2007,WangDS06,Iacono2005,%
	Harmon06:thesis,DemaineHarmonIaconoKanePatrascu2009,ChalermsookGKMS15,Iacono2016a,KozmaSaranurak2018,LevyTarjan2019}
the main conjecture remains wide open, and so is the more general question:
\begin{center}\slshape
	Is there an online binary-search-tree algorithm that~-- on any access sequence~--\\
	performs within a constant factor of the offline optimal for that sequence?
\end{center}

In this paper, we ask the same question for heaps%
\footnote{%
	In the context of this paper, by a ``heap'' we mean any tree-based priority-queue data structure.%
}.
Dynamic optimality of heaps has attracted a lot of interest recently
due to the work of Kozma and Saranurak~\cite{KozmaSaranurak2018} 
who formalized a correspondence between self-adjusting BSTs and 
self-adjusting heaps like pairing heaps.
They show that every heap algorithm in their model, ``stable heaps in sorting mode''
(discussed in \wref{sec:stable-heaps} in more detail),
implies a corresponding BST algorithm
with the same cost (up to constant factors and on the time-space inverted input).
If the converse holds, too, is unclear,
and they had to leave the question of dynamic optimality for heaps open.
We note that \emph{refuting} dynamic optimality for stable heaps 
would hence not have immediate consequences for the existence (or nonexistence) 
of dynamically optimal BSTs.

While the dynamic-optimality conjecture has spurred the much wider
study of online algorithms, 
\eg, \cite{ManasseMS90,BorodinE05:book,Hazan16:book,BansalBMN15,BubeckCLLM18,Lee2018},
competitiveness results are notoriously sensitive to
details of the model of computation.
In fact, the historical starting point of competitive analysis,
searching on linear lists~\cite{SleatorTarjan1985}, is taught in
graduate courses, but~-- as often overlooked~-- a more realistic
model allowing arbitrary rearrangements (which can be simplified
to linked-list operations) on the visited prefix,
allows to serve any sequence known in advance with $O(n \log{n})$ operations~--
significantly less than the $\Omega(n^2)$ lower bound for
even processing random online access sequences~\cite{MartinezRoura2000,Munro2000}.

For the binary-search-tree problem itself,
such a possibility, while strange, could also be consistent
with observed gaps between performances of splay algorithms
and more tuned dictionary data structures%
\footnote{%
	The performance
	gaps are often quoted as between $1.5 \times$ to $3 \times$,
	while the value of $\lg\lg{n}$ for most values of $n$
	in practice is at most $5$.
	To our knowledge, the $O(\log\log{n})$-competitive search tree
	algorithms~\cite{DemaineHarmonIaconoPatrascu2007,WangDS06}
	have not been evaluated in practice.%
}.
And as we will show in this paper,
the separation of online and offline algorithms
is a fact for heaps:
we refute dynamic optimality for heaps
based on tournament trees with decrease-key operations.

Our model of computation is a natural restriction of general pointer machines;
analogous to how computation on linked lists~\cite{MartinezRoura2000,Munro2000} 
and binary search trees~\cite{Harmon06:thesis,DemaineHarmonIaconoKanePatrascu2009}
have been defined.
Since nodes in a pointer machine have constant size,
arbitrary-degree heap-ordered forests (as used in stable heaps) 
are not a convenient choice as primitive objects of manipulation.
Instead, we use tournament trees~\cite[\S5.2.3]{Knuth98a:book}:
Here each node has two children, and all the original keys are
stored in the leaves.
Moreover, each internal node stores the minimum of its children.

The priority-queue operations can be implemented as follows:
The global minimum is always found at the root.
To extract the minimum, we follow the path of copies of the minimum to the leaf that stores it,
remove it (and its parent internal node) from the tree,
and update the labels on the path.
Insertion of a new element can be achieved by adding the new leaf and the old root
as the two children of a new root node; merging two queues is similar.
Assuming a pointer is provided to a leaf, we can also change its key to a different value
and update the labels on the path from this leaf to the root.

In this work, we focus on the propagation part of the operations,
particular after changing a key of a leaf, and we will assume the worst case, 
namely that propagation of label changes always continues all the way up to the root.
This corresponds to a sequence of decrease-key operations where the new key is the new minimum.%
\footnote{%
	Sequences of only change-key operations naturally occur, \eg, in merging $n$ runs;
	updates that \emph{all} produce a new minimum are much less natural, but
	are sufficient for our negative result. A dynamically optimal tournament heap would
	in particular have to handle such sequences optimally.
}
Our formal model encodes this
implicitly by requiring any access
to touch the root-to-leaf path:
\begin{definition}[Tournament trees]
\label{def:model}
	In the \textit{tournament-tree-with-$k$-fingers model of computation,}
	one maintains
	a collection of $n$ elements in the \textit{leaves} of a binary tree.
	To serve an access to $x$, we start with all fingers $F_1,\ldots,F_k$ pointing at the root.
	We can then use the following operations for each of the fingers $F_i$, $i=1,\ldots,k$:
	\begin{enumerate}[itemsep=0ex]
		\item Move $F_i$ to the parent, the left child or right child
			of its current location (provided the followed pointer is not null).
		\item Copy the location of $F_i$ into $F_0$ (the temporary finger).
		\item Move $F_i$ to the location $F_0$.
		\item Swap the subtree with root $F_0$ with the left or right child
			of the node at $F_i$'s current location.
		\item Detach the subtree with root $F_0$ from its parent and make it the left or right child
			of the node at $F_i$'s current location (provided the replaced pointer is null).
		\item Serve request from $F_i$ (provided $x$ is stored at its current location).
	\end{enumerate}
	Each sequence of operations must eventually serve the access (via the last operation).
	The cost of this access is taken to be the number of operations (total of all fingers).
\end{definition}
Our result for this model is the following separation of offline and online performance.
\begin{theorem}[Online/Offline Separation]
\label{thm:Main}
	For any value $k = k(n)$,
	the competitive ratio of tournament heaps with $k$-fingers is
	$\Omega(\max\{\log_k (n), \log k \}) = \Omega(\sqrt{\log n})$.
\end{theorem}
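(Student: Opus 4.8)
The two terms in $\max\{\log_k n,\log k\}$ will come from two different access sequences, and the consequence $\Omega(\sqrt{\log n})$ follows because, viewing $\log k$ as a variable in $[1,\log n]$, the quantity $\max\{(\log n)/\log k,\ \log k\}$ is minimized when $\log k=\sqrt{\log n}$, where it equals $\sqrt{\log n}$. Throughout I use as a black box the offline guarantee advertised in the abstract (an ``order-by-next-request'' tournament tree driven by the $k$ fingers): on every length-$m$ sequence an offline algorithm finishes in $O\!\big(m(\log_k n+2^{\lg^{*}n})\big)$ operations, and on suitably structured sequences it does much better, finishing in $O(m\cdot 2^{\lg^{*}n})$.

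\textbf{The $\Omega(\log k)$ bound (random sequence).}
Let the access sequence be $m$ i.i.d.\ uniformly random leaves. For any binary tree on $n$ leaves with leaf depths $d_1,\dots,d_n$, Kraft's inequality $\sum_i 2^{-d_i}\le 1$ together with convexity of $x\mapsto 2^{-x}$ forces the average leaf depth $\frac1n\sum_i d_i$ to be at least $\log_2 n$. By \wref{def:model} every access begins with all fingers at the root, and a finger reaches depth $d$ only after $\ge d$ moves, so an access costs at least the current depth of the requested leaf, regardless of $k$. At step $t$ the online algorithm's tree is a deterministic function of steps $1,\dots,t-1$, hence independent of the (uniform) leaf requested at step $t$; its expected cost at step $t$ is therefore at least the average leaf depth of its current tree, i.e.\ $\Omega(\log n)$. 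Summing, any online algorithm pays $\Omega(m\log n)$ in expectation, against offline's $O(m(\log_k n+2^{\lg^{*}n}))$, giving competitive ratio $\Omega(\log n/\log_k n)=\Omega(\log k)$ --- the $2^{\lg^{*}n}$ term is dominated whenever $\log_k n$ is unbounded, and is harmless otherwise since then $\log k=\Theta(\log n)$.

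\textbf{The $\Omega(\log_k n)$ bound (structured sequence).}
This is the small-$k$ regime, where the random sequence only yields $\Omega(\log k)$. Instead I would take a bit-reversal-type sequence that the offline algorithm, exploiting foresight, can lay out so that consecutive requests become an almost-local traversal, serving it in amortized $O(2^{\lg^{*}n})$; the matching lower bound, that \emph{every} online algorithm still needs $\Omega(\log_k n)$ amortized on it, is obtained by running Fredman's WADS~2011 generalization of Wilber's interleave bound (from ordered to \emph{unordered} binary trees) inside the tournament-tree model. Two ingredients are needed: the characterization of which trees an access of cost $c$ can reach --- the gap between $\exp(\Theta(n\log n))$ heaps and $\exp(\Theta(n))$ BSTs is the quantitative input that keeps the potential-function bookkeeping tight --- and the fact that $k$ fingers accelerate navigation by at most a factor $O(\log k)$, so Wilber's $\Omega(\log n)$ weakens only to $\Omega(\log_k n)$. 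Taking the maximum of the two bounds proves \wref{thm:Main}.

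\textbf{Main obstacle.}
The $\Omega(\log k)$ half is routine once the offline bound is granted (Kraft plus an independence argument). The crux is the $\Omega(\log_k n)$ half, and in particular engineering a single sequence that is simultaneously \emph{online-hard} --- forcing $\Omega(\log_k n)$ amortized against every online strategy, including one that tries to learn the pattern --- and \emph{offline-easy} --- servable in amortized $O(2^{\lg^{*}n})$ with foresight. These requirements pull in opposite directions: anything offline can exploit beyond its generic $O(\log_k n+2^{\lg^{*}n})$ guarantee looks, a priori, learnable by an online algorithm after a sufficiently long prefix. Resolving this tension is what forces the hard sequence to carry fresh unpredictability at every scale and forces the lower bound to be a genuine lifting of the Wilber/Fredman interleave argument into the finger model, with the $O(\log k)$-factor accounting of the fingers' reshaping power done exactly; that accounting is the delicate technical point.
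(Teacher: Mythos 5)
Your $\Omega(\log k)$ half is roughly aligned with the paper's: the paper lower-bounds the online cost by observing that in any binary tree most leaves sit at depth $\Omega(\log n)$, and upper-bounds the offline cost by the Permute lemma (\wref{lem:larger-k-perm}), which serves \emph{any permutation} in $O(n\log_k n)$ total. You instead use i.i.d.\ uniform accesses and the general offline bound $O(\log_k n + 2^{\lg^* n})$; this is workable in spirit, but your hand-wave that the extra $2^{\lg^* n}$ term is "harmless otherwise since $\log k=\Theta(\log n)$" does not actually hold for all $k$ in between, and the cleaner route is the paper's, which avoids the $2^{\lg^* n}$ term entirely by specializing to permutations.

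The $\Omega(\log_k n)$ half, however, has a genuine gap, and it is precisely the part you flag as "the crux." First, your proposed offline cost of $O(2^{\lg^* n})$ amortized is too large: with online cost $\Omega(\log_k n)$ the ratio would only be $\Omega(\log_k n / 2^{\lg^* n})$, and balancing against $\Omega(\log k)$ then gives $\Omega(\sqrt{\log n}/2^{(\lg^* n)/2}) = o(\sqrt{\log n})$, failing the theorem. The paper needs, and constructs, an offline algorithm that serves its hard sequence in $O(n)$ \emph{total}, i.e.\ $O(1)$ amortized (the five-step "transpose and apply $\pi$ once" procedure). Second, a "bit-reversal-type" sequence is the wrong object: bit-reversal is meaningful only for \emph{ordered} trees, and in the unordered tournament-tree model any fixed permutation can be flattened offline, so no fixed sequence separates online from offline; the paper's construction is essentially \emph{adaptive} — by a counting argument (\wref{lem:magic-permutation}, extending Fredman's Lemma~2) one shows a single permutation $\pi$ of a block such that, whatever state the online algorithm is in, at least one of $B_i$ or $B_i^\pi$ costs $\Omega(b\log_k b)$, and the adversary picks the expensive one block-by-block. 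Third, the $\log_k$ factor does not come from "fingers accelerate navigation by $O(\log k)$"; it falls out of the counting argument directly, since there are $(10k)^t$ operation sequences of cost $t$ and $b!$ permutations to distinguish. Your plan, as written, would not establish the offline-easiness you need, nor would it identify the adaptive adversary that makes the online lower bound a genuine separation rather than a bound holding for all algorithms including the offline one.
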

Moreover, while any online algorithm incurs amortized cost
$\Omega(\log n)$ per access for most
inputs even when $k$ is as large as $\sqrt{n}$,
we show that we can do much better even with a subpolynomial number of
fingers in the offline
case by present a simple, efficiently-computable offline algorithm.
\begin{theorem}[Efficient Offline Algorithm]
\label{thm:offline-algo}
	Any sequence of operations on a tournament heaps can be served using $k$ fingers
	with amortized $O(\log_{k}(n) + 2^{\lg^*(n)})$ cost per access.
\end{theorem}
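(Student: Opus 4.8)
The plan is to build the offline strategy in two layers: a \emph{static skeleton} that guarantees $O(\log_k n)$ cost per access in the ideal case, and a \emph{rebuilding scheme} that periodically restores the skeleton, with the $2^{\lg^* n}$ term absorbing the amortized rebuilding overhead. For the skeleton, I would maintain the $n$ leaves so that the tournament tree has a distinguished ``highway'': a path of length $O(\log_k n)$ from the root, along which the $k$ fingers fan out. Concretely, think of the leaves as partitioned into $k$ roughly equal blocks, each block recursively organized the same way; one finger descends the top-level structure while the remaining $k-1$ fingers are positioned to jump (via the $F_0$ copy-and-move operations in \wref{def:model}) directly into the block containing $x$. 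Since the recursion has depth $\log_k n$ and each level costs $O(k)$ finger-moves but these moves happen \emph{in parallel} across the $k$ fingers, the per-level cost charged is $O(1)$ amortized per finger times $O(1)$ levels of ``useful'' descent — the key point being that the tournament-tree operations let us relocate a whole subtree in $O(1)$ steps, so after serving $x$ we can splice its block back without paying for its internal depth.

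The second step is to make the structure \emph{self-adjusting offline}: after an access to $x$, move $x$'s block (and, recursively, $x$ within it) toward the front of the highway, in the spirit of move-to-front / order-by-next-request as used in \cite{MartinezRoura2000,Munro2000}. Offline, we know the next request to each block, so I would use the classical offline list-update optimum restricted to the $k$ blocks at each level of the recursion. The subtle point is the interaction between levels: an element promoted at the top level must still be findable inside its (also reorganized) block. I expect to handle this by a clean invariant — each finger $F_i$ is ``responsible'' for one level of the recursion and always rests at the current front of that level's block-list — so the reorganization at level $j$ only touches $O(1)$ pointers near $F_j$ and leaves the invariants at other levels intact.

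The $2^{\lg^* n}$ term is where the recursion on the \emph{alphabet size} enters, exactly as in the linked-list result quoted in the introduction (``$O(n\log n)$ to serve any known sequence''). At the top level we have $k$ blocks; treating block-identifiers as the new ground set of size $k$ and recursing the offline-optimal list scheme on them costs an extra $O(\log^* n)$ levels of recursion before the alphabet shrinks to constant size, and each level of this meta-recursion contributes a constant factor, giving the $2^{\lg^* n}$ overhead. I would make this precise by setting up a recurrence $T(n) \le T(n/k) + O(1)$ for the descent cost and a separate recurrence $R(n) \le 2\,R(\sqrt{\vphantom{n}\smash{n}}) + O(1)$ (or $R(n)\le c\,R(\log n)+O(1)$, whichever the bookkeeping forces) for the rebuilding/meta-list cost, then observe that the latter unrolls to $2^{\Theta(\lg^* n)}$.

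The main obstacle I anticipate is \textbf{not} the descent bound but the amortized accounting for reorganization: I must show that splicing a subtree to the front of its level — using only the swap/detach operations of \wref{def:model}, which require the replaced pointer to be null — can always be arranged in $O(1)$ true operations per level, and that the potential released by future accesses (known offline) pays for occasional global rebuilds that restore balance of the block sizes. Getting the potential function right so that the $O(\log_k n + 2^{\lg^* n})$ bound is genuinely \emph{amortized} (not worst-case per operation) is the delicate part; everything else is a matter of carefully chaining the known list-update and unordered-BST constructions into the tournament-tree primitives.
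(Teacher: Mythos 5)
Your two-layer outline (a $k$-ary skeleton for $\log_k n$ plus a meta-recursion for $2^{\lg^* n}$) has the right shape, but it is missing the paper's central idea and, as written, the accounting does not close. The paper's algorithm does \emph{not} obtain $O(\log_k n)$ from a highway-descent of depth $\log_k n$: that term comes from amortizing the cost of \emph{sorting} entire subsets of elements by next access time (the $k$-way-mergesort primitive of \wref{lem:larger-k-perm}), paid once per segment and once per bucket refresh. The navigation depth is a separate, much smaller $O(\log\log n)$ quantity, reduced to $2^{\lg^* n}$ by recursion, because there are only $b=\lceil \lg n\rceil$ buckets. Your scheme attributes $\log_k n$ to navigation and proposes to recurse on block-identifiers over a ground set of size $k$, which would yield $\lg^* k$, not $\lg^* n$; that alone shows the parameters do not match the claim.

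The missing ingredient is \emph{how to bucket so that bookkeeping stays cheap}. You propose move-to-front/order-by-next-request on blocks, with a finger invariant per level. The difficulty — which you acknowledge but do not resolve — is that ``next access time'' is a moving target: it changes every time step, so maintaining a sorted-by-next-request order incrementally would require inserting into a sorted structure at each access, destroying the $O(1)$ amortized-per-level budget. The paper avoids this by grouping elements instead by \emph{recurrence time} (the gap between consecutive accesses to the same key), which is a fixed attribute of each access and does not drift as time advances, and by equipping each bucket with two unsorted input buffers (``near future'' and ``far future'') plus a sorted queue that is regenerated wholesale only when the bucket ``expires.'' That buffer-and-expire mechanism is precisely what makes the insertions $O(1)$ per access and confines all sorting to refresh events whose cost is charged to future accesses. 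Without an analogue of this mechanism, the hierarchical list-update scheme you sketch has no way to keep block ordering consistent with future requests at amortized constant overhead, and the ``potential released by future accesses (known offline)'' you invoke has no concrete definition that would pay for the reorganizations. You also hand-wave the splice/detach feasibility; the paper never needs to splice a block to the ``front'' of a sibling list, only to promote one buffer to replace another inside a single bucket node, which is trivially $O(1)$. In short: the descent is easy, as you suspect, but the reorganization accounting that you flag as ``the delicate part'' is where the proof actually lives, and the recurrence-time bucketing with near/far-future buffers is the idea that makes it go through.
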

\wref[Theorems]{thm:Main} and~\ref{thm:offline-algo} show the fundamental importance 
that the underlying rearrangement primitives play,
in stark contrast to the BST model 
where any subtree replacement can be simulated via rotations in linear time.
The reason for this difference comes mainly from the fact that there are 
only exponentially ($2^{\Theta(n)}$) many BSTs on $n$ keys,
but factorially ($2^{\Theta(n \log n)}$) many heaps%
\footnote{%
	This statement being true for both standard heap-ordered trees and tournament trees, 
	but \emph{not} for Kozma and Saranurak's stable heaps!%
},
so tree rearrangements are much more powerful in heaps,
and standard local primitives like rotations are no longer sufficient.
Recall how weak primitive operations were also the critique of the model 
for self-adjusting linear lists~\cite{MartinezRoura2000,Munro2000} mentioned earlier.
Adding the number of fingers as a parameter to the model allows us to precisely quantify
the effect of more powerful rearrangement operations in tournament heaps.

Note that tournament trees place no restrictions on the order of keys in the leaves;
they are thus essentially equivalent to leaf-oriented \emph{unordered} binary trees.
Fredman~\cite{Fredman2011,Fredman2012} previously studied a similar model,
namely unordered binary trees, where keys are stored in all nodes, 
but no ordering constraint is placed on them.
He only considers a single finger, and purely local restructuring (rotations and subtree swaps).
He proved that in his model, for any given online method, 
one can construct (adaptively and in exponential time) an
adversarial input on which this online method incurs cost $\Omega(n \log n)$,
whereas the same input can be also served with linear costs 
(by an offline method tailored to this purpose).

Our result strengthens Fredman's work by explicitly studying leaf-oriented trees 
and (more importantly) by taking more realistic rearrangement primitives into account.
The result is a much stronger
separation between offline and online
algorithms as soon as a non-constant number of fingers is available:
The worst-case cost of our offline algorithm on sufficiently long sequences is asymptotically
smaller than the average cost any online algorithm can achieve on random sequences.
With $k=\omega(1)$, we refute dynamic optimality for tournament heaps 
even for the average competitive ratio.

\paragraph{Outline}
In \wref{sec:background}, we discuss related work,
and introduce notations for describing our models of trees.
In \wref{sec:models}, we prove the (worst-case) separation
between online and offline methods.
\wref{sec:loglog} presents our efficient offline algorithm for many fingers.
\wref{sec:conclusion} summarizes our findings and lists open problems.

\section{Background}
\label{sec:background}

In this section we introduce some notation for the search tree model,
and summarize related works, in particular we discuss differences 
of related models for trees studied in the context of dynamic optimality.

Our analysis will use the standard big-O notation,
and we write $f\sim g$ to denote $f = g(1\pm o(1))$
We will use $\lg$ to denote the binary logarithm,
and $\lg^{(k)}$ to denote the $k$-times iterated logarithms, \ie,
$\lg^{(1)}(n) = \lg(n)$ and $\lg^{(k+1)}(n) = \lg(\lg^{(k)}(n))$.
By $\lg^*(n)$ we denote the smallest $k$ so that $\lg^{(k)}(n) \le 1$.
Intervals over integers ranging from $a$ to $b$ will be denoted
using $[a \ldots b]$, and $[a]$ will be the shorthand for
$[1 \ldots a]$.

Dynamic optimality asks whether a data structure that
sees a sequence of operations online, that is, one at a time,
can perform as well as a data structure that sees the entire
access sequence ahead of time.
The critical definition for studying dynamic optimality is
the definition of an access sequence.
We will denote the $n$ keys as $1 \ldots n$, and denote an
access sequence of length $m$ as
\(
A = a_1,\ldots, a_m.
\)
Usually, we are interested in the case $m \ge n$.

For such an access sequence,
the cost of an online algorithm
is the cost of it accessing $a_1$, and then $a_2$
and so on, while the offline optimal cost of accessing $A$
is the minimum total cost of accessing the entire sequence.
The competitiveness ratio of an online algorithm $\textsc{Alg}$
on inputs of size $n$ is then
\[
\lim_{m \rightarrow \infty}
\max_{\text{$A$ of length $m$}}
\frac{\text{cost}\left( \textsc{Alg}\left( A \right) \right)}
{\textsc{OfflineOptimum}\left( A \right)}.
\]
We say that $\textsc{Alg}$ is $f(n)$-competitive if its
competitive ratio on any input of size $n$ is at most~$f(n)$ for large enough $n$.
For the converse, to show that $\textsc{Alg}$ is \emph{not} $f(n)$-competitive, 
it suffices to find a specific family of (arbitrarily large) inputs 
where the competitive ratio is worse than $f(n)$.

Small differences in the models of computation
have profound consequences for the performances of heaps.
We therefore start by formally defining and comparing these models.

\subsection{Trees in the Pointer Machine Model, and Fingers}

All of our data structures will be modeled using pointer machines.
Here nodes of trees are represented as a collection of $O(1)$ pointers,
each pointing to some other node.
In particular, a binary tree is a collection of nodes
each pointing to a parent, and a left/right child;
some of these pointers can be ``null''.

Access and modifications of pointer-based data structures are done
by manipulating the pointers.
For this purpose, it is useful to consider ``fingers'', which
are special global pointers kept by the data structure at the
topmost level.
These objects can be viewed as generalizations of the ``root'' vertex,
which in this terminology is a static finger where all
subsequent accesses start from.
We keep the number of fingers as a parameter, $k$,
which is allowed to depend on the size $n$ of the data structure,
(similar to how the word size $w$ of a word-RAM may depend on $n$).
As mentioned in \wref{def:model}, the cost of
performing a sequence of accesses is entirely the number of
operations performed in moving/duplicating the fingers,
and rearranging the pointers incident to them.
Our notion of fingers is in principle the same as defined earlier for BSTs,
see, \eg, the lazy fingers~\cite{DemaineILO13}, but we prefer to explicitly 
distinguish between \emph{transient} and \emph{persistent} fingers:
\begin{definition}
	A data structure has access to $k$ \textit{persistent fingers} if it is
	able to track, as global variables, $k$ special pointers
	that it is able to retain across accesses.
	The algorithm is allowed to manipulate these fingers arbitrarily
	during the accesses.
	In contrast, we use \textit{transient finger} to denote fingers that
	only exist during a single operation, and are forgotten / reset
	before the next operation.
\end{definition}
Our definition of transient fingers is motivated by the
observation that a large number of persistent pointers 
trivializes most data structure questions~--
formalized in Appendix~\ref{app:elementary-offline-array-algorithm},
specifically Lemma~\ref{lem:very-very-lazy}~--
but the same is not true for transient fingers.
We can use transient fingers for tree rearrangement, 
but not for shortening the access path.
In particular, in the standard BST model, transient fingers
do not add any power to the algorithm, as local rearrangements
(rotations) are equally powerful there.

\subsection{Dynamic Optimality in Binary Search Trees}

The binary-search-tree model is another restricted pointer-machine model,
in which each node of a binary tree stores a key, and the keys have to fulfill
the search-tree property.
An execution in the model can move a finger around the tree or rotate an edge of the tree.
A variant of the BST model instead asks for specifying a \emph{replacement tree} 
for the subset of nodes visited while serving a request. 
Here, costs are measured by the number of visited nodes.
For BSTs, both models are equivalent (up to constant factors)~\cite{Kozma2016},
and so is the addition of further transient fingers.
For a general overview on dynamic optimality,
we refer the reader to Iacono's 2013 survey~\cite{Iacono2013} and 
the comprehensive introduction in Kozma's dissertation~\cite{Kozma2016}.
They discuss (instance-specific) upper bounds (\eg, the working-set bound), 
(instance-specific) lower bounds, 
and the state of knowledge on concrete algorithms, in particular
Splay~\cite{SleatorTarjan1985} and Greedy~\cite{Lucas1988,Munro2000},
as well as the geometric view of BST algorithms based
on satisfied point sets~\cite{DemaineHarmonIaconoKanePatrascu2009}.

It is easy to see that when we do \emph{not} know the accesses in advance, 
most access sequences will require costs in $\Omega(n\log n)$:
at any point in time all but $2^{\lfloor \lg n\rfloor/2} \le \sqrt n$ nodes 
are at depths $\ge \lfloor \lg n\rfloor / 2$ (and hence incur logarithmic cost to access).
In the offline model, when we do know all accesses in advance, this is not at all obvious, 
but it has been shown that there are ``universally hard'' access 
sequences that require $\Omega(n\log n)$ access cost
in \textit{any} binary-search-tree algorithm,
online or not~\cite{Wilber1989,DemaineHarmonIaconoKanePatrascu2009}.

An intriguing feature of the binary-search-tree model is that not only the
question about constant-competitive online algorithms remains wide open,
also the existence of (reasonably efficient) instance-optimal \emph{offline} algorithms is unsolved.
Indeed, designing good offline algorithms seems no simpler, and this fact
is seen as one reason of why a proof of (or counterexample for) dynamic optimality 
for splay trees has remained elusive~\cite{
	ChalermsookGKMS15,LevyTarjan2019}.
To add insult to the injury, a breakthrough result in the field
was that Greedy, the most promising candidate of an instance-optimal offline algorithm,
can indeed be turned into an \emph{online} algorithm 
(paying only a constant-factor increase in access costs, but a hefty fee in terms of 
conceptual complexity of the algorithm)~\cite{DemaineHarmonIaconoKanePatrascu2009}.
The current best upper bounds are $O(\log\log{n})$ 
competitive search trees~\cite{DemaineHarmonIaconoPatrascu2007,WangDS06}.

\paragraph{Standard vs Leaf-Oriented Trees}

The standard BST model stores a key in every node of a binary tree.
An alternative are leaf-oriented BSTs, where only leaves carry a key,
and internal nodes contain a copy of key serving only as ``routers'' for guiding searches.
Even though they are much less prominent than their cousins with keys in all nodes,
leaf-oriented BSTs have been studied, \eg, in the context
of concurrent data structures~\cite{EllenFatourouRuppertBreugel2010}, 
where their conceptual simplicity and the locality of pointer changes is valuable.

From the perspective of adaptive BSTs,
standard BSTs and leaf-oriented BSTs turn out to be equivalent:
there are constant-factor-overhead simulations for both directions.
The details are given in \wref{app:leaf-oriented-BSTs}.

\subsection{Unordered Binary Trees}

The work closest to ours are the articles by Fredman~\cite{Fredman2011,Fredman2012} 
mentioned above; his motivation, too, was to study self-adjusting heaps.
Clearly, the search-tree property is a useless restriction for priority-queue
implementations; but so seems insisting on \emph{binary} trees.
Indeed, both pairing heaps (an analog of splay trees in the priority-queue world)
and Fibonacci heaps 
are heap-ordered trees with arbitrary node degrees.
However, Fredman earlier showed that such forest-based heaps can~--
in some generality~-- be encoded as binary tournament trees:
In~\cite{Fredman1999}, he discusses how tournament trees 
can be simulated by forest-based heaps, 
and this mapping can also be used in reverse.

In tournament trees, all accesses are to leaves,
so if one was to consider the question whether pairing heaps 
or other self-adjusting heap variants~-- recast as tournament-tree rearrangement
heuristics~-- are (constant-)competitive algorithms, 
one should only demand competitiveness against accesses to leaves.
In Fredman's model of unordered binary trees, all nodes carry a key,
but we extend his arguments to the leaf-oriented tournament trees in this paper.

\subsection{Stable heaps}
\label{sec:stable-heaps}

In a recent work,
Kozma and Saranurak~\cite{KozmaSaranurak2018} set out
to establish a theory of instance-optimality for forest-oriented heaps 
(and in particular pairing heap variants).
They restrict access sequences on heaps to ``sorting mode''~--
$n$ inserts followed by $n$ extract-mins~-- and modify the primitive ``link'' operation
to be \emph{``stable'',} \ie, to always keep the left-to-right order of subtrees intact.

More specifically, after an initial sequence of $n$ inserts, the heap consists of
a list of $n$ top-level singleton roots.
Each of the following $n$ extract-min operations is served by
stably linking adjacent pairs of top-level roots, reducing their number by on each time,
until a single root is left (which contains the minimum).
The minimum is then removed, and its children form the new list of top-level roots.

The main result of Kozma and Saranurak is that every heap algorithms in 
this ``stable-heap'' model translates to a binary-search-tree algorithm, but critically, 
does not show that binary-search-tree algorithms imply heap ones.
Therefore, the connections exhibited in~\cite{KozmaSaranurak2018}
do not rule out the possibility that dynamic optimality holds
for binary search trees, but not for (stable) heaps.

An important observation about stable heaps is that the stability condition for links
implies that there are at most $C_n \le 4^n$ stable heap structures for a fixed insertion 
order of $n$ elements.
It is unclear what consequences this restriction of the freedom in rearrangements has
for algorithms.

\subsection{Further Related Work}

There are few other works that modify the computational model
to gain insight into the nature of the dynamic-optimality conjecture.
Iacono~\cite{Iacono2005} introduced the notion of ``key-independent optimality'',
in which costs are averaged over all possible orders of keys (key ranks chosen randomly).
He shows that any algorithm satisfying the working-set bound is optimal in the key-independent sense,
and hence so are Splay and Greedy.
The setup is different from our unordered trees 
since the maintained tree does have to conform to the search-tree
property once the order of keys has been chosen.
Bose et al.~\cite{Bose2008} studied dynamic optimality on skip lists and variants of B-trees.
They show that when insisting on certain balancing criteria, the working set bound
is actually a lower bound for serving an access sequence with these data structures.

To our knowledge, the systematic separation of online
and offline algorithms, or lower bounds for competitive ratios,
is relatively understudied.
Lower bounds for competitive data structures that we are aware
of only include deterministic paging algorithms~\cite{ManasseMS90},
and linear searches on lists under arbitrary
rearrangements of visited portions~\cite{MartinezRoura2000,Munro2000}.

Our modeling of heaps with the goal of providing an offline/online
separation is motivated by analogous results
on lists~\cite{MartinezRoura2000,Munro2000}, which gave a rearrangement
model where online algorithms must take $\Omega(n^2)$,
while offline algorithms take $O(n \log{n})$.
However, we spend significantly more, if not most, of our effort
addressing limitations on how the visited portion at each access
can be rearranged.
This is because of the much lower worst-case runtime upper
bound in the static case ($O(n \log{n})$ as opposed to $O(n^2)$
for move-to-front on lists): the $O(\log{n})$ overhead associated
with an arbitrary shuffle, or the $O(n \log{n})$ upper-bound
obtained from implementing a merge-sort like scheme~\cite{Munro2000}
is too high for heaps.

Our treatment of fingers follows the study of multi-finger
binary search trees by Demaine et al.~\cite{DemaineILO13}
and Chalermsook et al.~\cite{ChalermsookGKMS15}.
To our knowledge, aside from the restriction to rotations
made by Fredman~\cite{Fredman2011,Fredman2012}, which implicitly
assumes a constant number of fingers, the role of fingers
in heaps have not been explicitly studied previously.

\section{Online and Offline Separations}
\label{sec:models}

In this section, we prove our first result, \wref{thm:Main}.
As a warmup, we present a (simplified) counting argument for Fredman's ``Wilber-style'' 
lower bound.
We then provide two bounds for the competitive ratio of any online tournament-tree
algorithm, which are interesting for a small resp.\ large
numbers of fingers.

\subsection{Information-Theoretic Wilber Bound}

Fredman proved for his model of unordered binary trees,
that some access sequences require cost $\Omega(n \log n)$ 
to serve (Theorem~3 in~\cite{Fredman2012}).
We extend his result to tournament trees with $k$ fingers.

\begin{theorem}[Wilber-style lower bound]
\label{thm:wilber0}
	For any $n$ and $m$ there is an access sequence $A \in [n]^m$
	that requires total cost at least
	\[
		m \cdot \log_{10k}(n)
		\wwrel=
		 m \cdot \log_k(n) \left(1\pm\Oh\left(\frac1{\log k}\right)\right)
	\]
	in any tournament tree with $k$ fingers, even offline and with persistent fingers.
\end{theorem}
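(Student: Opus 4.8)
The plan is to replace Fredman's adaptive, exponential-time adversary construction by a one-line counting (encoding) argument, exploiting that an execution of \emph{any} algorithm in this model is a short string of elementary operations from which the access sequence can be read back. Fix the initial configuration (the tree, together with the finger placements if the fingers are persistent); the same one is used by every algorithm. For any offline algorithm serving a sequence $A\in[n]^m$, record the \emph{transcript} $\tau$: the list, in order, of the elementary operations of \wref{def:model} that it performs. I claim the map $\tau\mapsto A$ is injective. Indeed, replaying $\tau$ from the known initial configuration is deterministic, and the restructuring primitives (swapping, or detaching-and-reattaching, whole subtrees) merely permute the leaves and never change which key occupies which leaf; hence at every moment the leaf-to-key assignment is known, so each ``serve from $F_i$'' operation reveals the key requested at that step, and these keys in order are exactly $A$. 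This works verbatim for online or offline algorithms and for transient or persistent fingers.

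Next I would count transcripts of total cost at most $T$. At each unit of cost, \wref{def:model} offers: a finger $F_i$ and one of parent/left/right to move it ($\le 3k$), a finger to copy into $F_0$ ($k$), a finger to move onto $F_0$ ($k$), a finger and a side for a subtree-swap against $F_0$ ($2k$), a finger and a side for a detach-and-reattach against $F_0$ ($2k$), or a finger from which to serve ($k$): at most $10k$ choices per step. Hence there are at most $\sum_{t\le T}(10k)^t\le 2\,(10k)^T$ transcripts of cost $\le T$, and since the initial configuration is fixed, at most this many sequences $A\in[n]^m$ admit an offline solution of cost $\le T$.

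Combining the two facts, if every $A\in[n]^m$ admitted an offline solution of cost at most $T$, then $n^m\le 2\,(10k)^T$, which forces $T\ge m\log_{10k}n-\Oh(1)$. Hence some $A$ requires offline cost at least $m\log_{10k}n-\Oh(1)$; the additive $\Oh(1)$ is immaterial for the competitive-ratio statement of \wref{thm:Main}, which is a limit as $m\to\infty$, and the alternative form in the theorem is just the identity $\log_{10k}n=\log_k n/(1+\log_k 10)=\log_k n\,(1\pm\Oh(1/\log k))$. If instead one also lets the offline algorithm choose its initial tree, the count gains a factor equal to the number of labelled tournament trees on $n$ leaves, namely $\exp(\Oh(n\log n))$; this only subtracts a further $\Oh(n\log n/\log k)$ from $T$, again irrelevant in the $m\to\infty$ limit.

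The one genuinely delicate point is pinning down the base of the logarithm, i.e.\ verifying that \wref{def:model} really offers no more than the $10k$ tallied choices per unit of cost --- in particular that $F_0$ is never moved on its own and that ``swap'' and ``detach-and-reattach'' each carry only a binary left/right decision. Everything else --- injectivity of $\tau\mapsto A$, the geometric-series bound, and absorbing the lower-order terms --- is routine, provided one checks that the restructuring operations leak no information beyond what $\tau$ already records.
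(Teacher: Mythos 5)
Your proposal is essentially identical to the paper's proof: both count transcripts of elementary operations, note that the transcript (together with the fixed initial configuration) uniquely determines the served access sequence, enumerate $10k$ choices per unit of cost, and compare $(10k)^t$ against $n^m$. The only cosmetic difference is that the paper avoids your additive $\Oh(1)$ slack by padding shorter transcripts with dummy operations to length exactly $t$ (so it counts only length-$t$ strings rather than summing the geometric series), which yields the stated bound $m\log_{10k}(n)$ on the nose; as you correctly observe, this difference is immaterial for the downstream competitive-ratio claim.
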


\begin{proof}
The proof is a counting argument.
We can encode any sequence of $t$ operations 
(of the allowed operations as defined in \wref{def:model}) 
in a tournament tree with $k$ fingers
by specifying for each time step, 
which of the $k$ fingers we used and 
which of the 10 possible operations we executed.
Given the sequence of operations and the initial tree, 
we can uniquely reconstruct the access sequence $A$
that was served by it (by virtue of the ``serve'' operations).

In total, there are $(10k)^t$ sequences of operations with cost $t$, 
from which we can reconstruct at most $(10k)^t$ different access sequences
that can be served with cost $t$;
(some encodings represent an invalid execution and do not correspond to a
served access sequence).
Note that we can always add dummy operations to 
an operations sequence with cost $<t$ to turn it into one of 
length exactly $t$ that serves the same access sequence $A$, 
so it suffices to count the latter ones.

Since there are $n^m$ different access sequences of length $m$ on $n$ keys, 
we can only serve all correctly when $(10k)^t \ge n^m$, or when
$t \ge m \log_{10k}(n)$.
\end{proof}

This means, the best amortized cost per access to hope for is $\Theta(\log_k(n))$ 
(in the worst case).
Our offline algorithm in \wref{sec:loglog} will essentially achieve that.
The above proof also shows that half of all access sequences require cost
$\ge m \log_{10k}(n/2)$ etc., so $\log_k(n)$ is indeed a lower bound for
the \emph{average} amortized cost, as well.

\subsection{Few Fingers}

We extend the rotation-based
argument by Fredman~\cite{Fredman2011} to account for all possible
operations involving the fingers as defined in \wref{def:model}.
We first generalize the key lemma from Fredman's lower bound~\cite[Lem.\,2]{Fredman2011}.

\begin{lemma}[Adversarial Permutations]
\label{lem:magic-permutation}
	For any $n$, any sufficiently large $b \leq n$ and $k = o(b)$ (as $b\to\infty$),
	we can find a (fixed) ``adversarial'' permutation
	$\pi$ on  $[1 \ldots b]$ such that for any initial configuration $(T,I,B)$ of a
	tree $T$ on $[1 \ldots n]$,
	locations $I$ of $k$ (persistent) fingers,
	and access sequence $B = a_1, \ldots, a_{b}$ of $b$ distinct accesses
	in $T$,
	wither the sequence $B$ itself
	or the permuted sequence $B_\pi = a_{\pi(1)}, \ldots, a_{\pi(b)}$ requires cost
	at least $0.3 b \log_{10k}(b)$, 
	even offline and using $k$ persistent fingers.
\end{lemma}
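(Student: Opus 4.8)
I want to prove Lemma~\ref{lem:magic-permutation}, which says: there's a fixed permutation $\pi$ on $[1\ldots b]$ such that no matter the initial tree, finger positions, and access sequence $B$ of $b$ distinct keys, at least one of $B$ or $B_\pi$ costs $\ge 0.3 b\log_{10k}(b)$ to serve. The natural strategy, following Fredman, is a counting/probabilistic argument: show that if we pick $\pi$ uniformly at random, then the probability (over $\pi$) that some configuration $(T,I,B)$ has *both* $B$ and $B_\pi$ cheap is less than $1$; since the adversary's permutation is fixed in advance but must work against all configurations, I actually need the stronger statement that the *number* of configurations for which both sequences are cheap is, for a suitable fixed $\pi$, smaller than it needs to be — or rather, I should count pairs $(\text{configuration}, \pi)$ and use an averaging argument.

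**Key steps.** First I would fix the threshold $t = 0.3 b\log_{10k}(b)$ and invoke the encoding idea from the proof of Theorem~\ref{thm:wilber0}: any execution of cost $\le t$ starting from a given initial configuration is described by a string over an alphabet of size $10k$ of length $t$, so from a *fixed* initial tree $T$ and finger locations $I$, at most $(10k)^t$ distinct access sequences of length $b$ can be served with cost $\le t$. Second, the crucial observation: if both $B$ and $B_\pi$ can be served cheaply from the *same* configuration $(T,I)$, then $B_\pi$ is one of these $\le(10k)^t$ sequences, i.e., $\pi$ maps the cheap sequence $B$ into the cheap set. Now fix $T$, $I$, and let $S(T,I)$ be the set of length-$b$ sequences of distinct keys servable from $(T,I)$ with cost $\le t$; we have $|S(T,I)| \le (10k)^t$. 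For a random $\pi$, the expected number of pairs $(B, B_\pi)$ with both in $S(T,I)$ is $\sum_{B\in S(T,I)} \Pr_\pi[B_\pi \in S(T,I)]$. Since $B$ has $b$ distinct entries, $B_\pi$ ranges over the $b!$ reorderings of a fixed multiset (here, set), of which at most $|S(T,I)|$ lie in $S(T,I)$; so $\Pr_\pi[B_\pi\in S(T,I)] \le |S(T,I)|/b!$ — wait, this needs care, because different $B$ can be reorderings of each other. Better: restrict attention to one underlying $b$-subset $X$ of $[b]$ at a time (for the lemma $X = [1\ldots b]$ suffices, but even in general), count sequences on $X$, and note $\pi$ permutes the $b!$ orderings of $X$ transitively. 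So the expected number of ordered pairs $(B,B_\pi)$ both servable is at most $|S(T,I)|^2 / b!$.

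**Finishing.** Summing over all initial configurations $(T,I)$: the number of trees $T$ on $n$ leaves is at most $2^{O(n\log n)}$ and the number of finger placements is at most $n^k$, so the total is some bound $N(n,k)$. By linearity, there is a fixed $\pi$ for which the total number of triples $(T, I, B)$ with both $B,B_\pi\in S(T,I)$ is at most $N(n,k)\cdot (10k)^{2t}/b!$. For the lemma I need this to be $0$ — or rather, I need that for *every* configuration and *every* $B$, at least one of $B, B_\pi$ is expensive; that is a union-bound statement, not just an expectation one. So the right quantifier order is: pick $\pi$ uniformly; the probability that there exists a bad triple is at most $\sum_{(T,I)} |S(T,I)|^2/b! \le N(n,k)\,(10k)^{2t}/b!$, and I need this $<1$. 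Plugging in $t = 0.3b\log_{10k}(b) = 0.3 b\log_{10k} b$, we get $(10k)^{2t} = b^{0.6b}$, while $b! \ge (b/e)^b = b^{b - b/\ln b}$, so $(10k)^{2t}/b! \le b^{-0.4b + o(b)}$, which crushes $N(n,k) = 2^{O(n\log n)} \le 2^{O(n\log n)}$ provided $b$ is large relative to... hmm, here's the subtlety: $N(n,k)$ depends on $n$, not $b$, and $n$ can be much larger than $b$.

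**The main obstacle.** That last point is exactly where the real work is, and it's why the lemma fixes $\pi$ on $[1\ldots b]$ and the access sequence $B$ uses $b$ distinct accesses *in $T$* but the tree is on $[1\ldots n]$ with $n$ possibly huge. The naive union bound over all $2^{O(n\log n)}$ trees fails when $n\gg b$. The fix — and this is the part I expect to be delicate — is to argue that only the "local" structure of $T$ near the accessed keys and fingers matters, or more precisely to change the quantifier order: the adversary does *not* need $\pi$ to beat all trees simultaneously in a union bound; rather, given any *fixed* configuration $(T,I,B)$, we want: for the *fixed* $\pi$, at least one of $B,B_\pi$ is expensive. Re-examining Fredman's argument, I believe the right move is: the set $S(T,I)$ of cheaply-servable sequences, *restricted to sequences using only the keys appearing in $B$* (a fixed $b$-set), has size $\le (10k)^t$ regardless of $n$, because the encoding length is $t$ and independent of $n$. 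Then the bad event for a fixed $b$-subset $X$ and fixed $(T,I)$ is that $\pi$ maps some cheap-from-$(T,I)$ ordering of $X$ to another; the number of $(T,I)$ that could matter is still $2^{O(n\log n)}\cdot n^k$, so I *do* still need a union bound over trees unless I collapse equivalence classes. The cleanest resolution, which I'd adopt: observe that two configurations inducing the *same* cheaply-servable set $S(\cdot)\cap X^b$ are equivalent for our purposes, and since $S(\cdot)\cap X^b \subseteq X^b$ has size $\le (10k)^t \ll b!$, the number of *distinct* such sets is at most $\binom{b!}{(10k)^t} \le (b!)^{(10k)^t} = b^{O(b\log b \cdot (10k)^t)}$ — still too big. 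So ultimately I think one must argue, as Fredman does, via a *per-configuration* probabilistic bound combined with the fact that the lemma is allowed to pick $\pi$ after seeing $n$ and $k$ but the $0.3$ constant and the $b$-growth give enough slack: choose $\pi$ so that for the *single worst-case* sense it suffices to beat configurations whose cheap-set meets $X^b$ nontrivially, and a careful count shows $(10k)^{2t}/b! = b^{-\Omega(b)}$ beats even $2^{n\log n}$ once we note $b\le n$ forces... no. I will present the argument in the form: fix $(T,I,B)$; show $\Pr_\pi[\text{both }B,B_\pi\text{ cheap}] \le (10k)^t/b!$; this is exponentially small in $b$; and then handle the "for all configurations" by noting it suffices to fix the $b$-set $X=[1\ldots b]$ and the relevant data is only the $(10k)^t$-element cheap-set, the number of which is bounded — working out the precise collapsing argument that makes the union bound converge is the technical heart of the proof, and the generous constants ($0.3$ vs the available $\sim 0.5$) are exactly the budget Fredman's original Lemma~2 spends on it.
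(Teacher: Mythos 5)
You correctly set up the counting/encoding framework and, crucially, you correctly spot where the naive union bound breaks: summing $(10k)^{2t}/b!$ over all $2^{\Theta(n\log n)}$ trees on $[n]$ diverges when $n\gg b$. But your proposal stops there---you explicitly concede that ``working out the precise collapsing argument that makes the union bound converge is the technical heart of the proof'' without producing it. So there is a genuine gap, and it is exactly the one you flagged.

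The missing ingredient is the paper's \wref{lem:small-trees}: if both $B$ and $B_\pi$ can be served from $(T,I)$ with cost at most $t$, then the union of the nodes visited by the two execution sequences (together with the finger positions and the $b$ accessed leaves) has size at most $2t$; taking the induced subtree and stitching its components together arbitrarily yields a tree $T'$ on at most $2t$ vertices from which both $B$ and $B_\pi$ are still cheaply servable. This lets one replace the hopeless sum over $2^{\Theta(n\log n)}$ trees by a sum over at most $4^{2t}$ small trees (and $(2t)^k$ finger placements), after which the count $W(t)=2^{4t}(2t)^k(10k)^{2t}$ of witness quadruples $(T',I,S_1,S_2)$ is compared directly against $b!$: since each such quadruple determines $\pi$ and thus refutes at most one candidate permutation, $W(t)<b!$ forces an adversarial $\pi$ to exist. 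Your probabilistic reformulation (bound $\Pr_\pi[\text{both cheap}]$ per configuration, then union-bound) is an equivalent presentation once the small-tree reduction is in hand, but without that reduction neither framing closes. I would also note that the clean way to avoid the ``different $B$ can be reorderings of each other'' worry you raise is the paper's direct bijection: recover $(B,B_\pi)$, hence $\pi$, from the witness quadruple, so each quadruple kills at most one $\pi$---no probability needed.
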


The proof is an extension of Fredman's argument.
We first segment out the fact that candidates for adversarial permutations 
can be refuted with small trees.

\begin{lemma}[Small counterexample trees]
	\label{lem:small-trees}
	For any values $b$ and any value $t$,
	if $\pi$ is a permutation such that
	there exists a tree $T$ on $[n]$ with $n \geq b$
	and initial positions $I \in [n]^k$ for $k$ (persistent) fingers in $T$,
	as well as an access sequence $B= a_1, \ldots, a_b$
	such that both $a_1, \ldots, a_b$ and
	$B_\pi = a_{\pi(1)}, \ldots, a_{\pi(b)}$ 
	can be served starting from $(T,I)$ with cost at most $t$,
	then there exists a tree $T'$ over $\mathcal N \subseteq [n]$ 
	containing $1,\ldots,b$
	on at most $t' = |\mathcal N| \le 2t$ vertices
	so that both $a_1, \ldots, a_b$
	and $a_{\pi(1)}, \ldots, a_{\pi(b)}$ can be served starting from $(T',I)$
	with total cost at most $t$.
\end{lemma}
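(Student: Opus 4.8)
The plan is to take the large witnessing tree $T$ on $[n]$ together with the two execution traces (one serving $a_1,\ldots,a_b$ and one serving $a_{\pi(1)},\ldots,a_{\pi(b)}$, each of cost at most $t$ from $(T,I)$) and project everything down to the portion of the tree that is actually ``touched'' during these two executions. First I would let $\mathcal{N}_0 \subseteq [n]$ be the set of leaves carrying the keys $1,\ldots,b$ — note these must all be present and reachable, since the $a_j$ are distinct accesses drawn from $[1\ldots b]$ and both executions serve them. Then I would let $\mathcal{N}_1$ be the set of all nodes ever visited by a finger in either execution (including the root, where all fingers start, and including every node involved in a subtree-swap or detach operation). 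Since each execution has cost at most $t$ and each unit of cost moves/copies a finger or rearranges a pointer incident to a finger, $|\mathcal{N}_1| \le O(t)$ in each execution; combining the two gives a touched set of size $O(t)$. The claim $|\mathcal{N}| \le 2t$ will require being a bit careful with the constant, essentially observing that each of the $\le t$ operations in a trace can introduce at most one ``new'' node into play, so a single trace touches at most $t$ nodes beyond the root, and a shared-root argument keeps the union at $\le 2t$; I would set $\mathcal{N} = \mathcal{N}_0 \cup \mathcal{N}_1$ (absorbing $\mathcal{N}_0$ into the bound using $b \le t$, which holds because serving $b$ distinct accesses costs at least $b$).

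Next I would define $T'$ as the subgraph of $T$ induced on $\mathcal{N}$. The key structural point to check is that $T'$ is again a valid binary tree in the model: the touched set is closed under the ancestor relation restricted to what the executions need, because a finger can only reach a node by walking down from the root, so every touched node comes with a touched path to the root. Subtrees that are swapped or detached bring their roots into $\mathcal{N}$, but their interiors are never inspected — crucially, the model's operations (swap the subtree at $F_0$ with a child; detach and reattach) manipulate only the \emph{pointer} at the top of such a subtree, never its contents — so I can replace each such untouched subtree by a single placeholder leaf (or, if $\mathcal{N}$ as defined already includes a representative, just truncate). I would then argue that the identical sequence of operations, reinterpreted on $T'$ with fingers starting at the same locations $I$ (which lie in $\mathcal{N}$ since all fingers start at the root and the root is touched — if some finger in $I$ starts elsewhere one adds those $\le k \le$ touched nodes too, still $O(t)$), produces exactly the same behavior: every pointer-follow that was non-null in $T$ is non-null in $T'$ because the followed node was touched hence retained, every ``null'' precondition is preserved since we only deleted nodes, and every ``serve'' succeeds because the relevant key-leaf is in $\mathcal{N}_0 \subseteq \mathcal{N}$. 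Hence both access sequences are served from $(T',I)$ with cost at most $t$, and $T'$ contains $1,\ldots,b$ and has at most $2t$ vertices.

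The main obstacle — and the step deserving the most care — is verifying that truncating the swapped/detached subtrees to placeholder leaves genuinely does not change the executability of either operation sequence. One has to confirm that no operation in Definition~\ref{def:model} ever reads inside such a subtree (it does not: moves go up/down one step from a finger, copies and reattachments act on $F_0$ and the immediate pointer, and ``serve'' only fires at a finger's own location), and that after a swap or detach the placeholder is still a well-formed node with the right parent/child pointers so that any \emph{later} operation touching that region still behaves identically. A secondary subtlety is bookkeeping the constant in $|\mathcal{N}| \le 2t$: I would charge the root and the $\le t-1$ further nodes of each trace, use $b \le t$ to fold in $\mathcal{N}_0$, and handle the $k$ initial finger positions by noting they are either the root or reachable, in both cases already counted — or, if the statement intends all fingers to start at the root as in Definition~\ref{def:model}, this issue disappears entirely. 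Everything else is routine once the ``touched nodes suffice'' principle is set up.
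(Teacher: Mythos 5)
Your overall strategy matches the paper's: restrict attention to the nodes actually visited by the two operation sequences, form a small tree $T'$ on that set, and argue both executions replay unchanged. That is the right idea and most of your justification (no operation reads inside an untouched subtree, swap/detach only manipulate the top pointer, every ``serve'' lands on a retained leaf) is sound.

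There is, however, a genuine gap in the connectivity step. You argue that $T'$ is a tree because ``a finger can only reach a node by walking down from the root, so every touched node comes with a touched path to the root.'' This is true for transient fingers, but the lemma is explicitly stated for \emph{persistent} fingers with arbitrary initial positions $I \in [n]^k$. A persistent finger starting at some deep node $v$ can explore a neighborhood of $v$ without ever visiting the path from $v$ up to the root, so the touched set need not be connected and the induced subgraph is in general a forest. The paper acknowledges exactly this: the visited region of each execution is a union of up to $k$ connected pieces (one per finger, segmented at jumps), and the union over both executions is up to $2k$ pieces; the proof then \emph{arbitrarily reconnects} the components (attaching one component as a child of a leaf of another) to obtain a single tree $T'$ on at most $2t$ nodes, and notes $S_1,S_2$ remain valid on this patched tree since the glue edges are never followed. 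Your proposal needs this repair step. A second, smaller point: writing $\mathcal N = \mathcal N_0 \cup \mathcal N_1$ and ``absorbing $\mathcal N_0$ using $b \le t$'' gives the bound $|\mathcal N_0| + |\mathcal N_1| \le b + 2t \le 3t$, not $2t$; to recover the stated bound you should observe that $\mathcal N_0 \subseteq \mathcal N_1$ outright (every served key's leaf is necessarily visited by a finger), so $\mathcal N = \mathcal N_1$ and the $2t$ bound is immediate from ``each execution touches at most $t$ nodes.''
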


\begin{proof}
The proof is analogous to the second part of Fredman's
proof, but with the role of root replaced by the fingers:
In short, an operation sequences of cost $t$ 
can touch at most $t$ vertices on top of the accessed nodes
$a_1,\ldots,a_b$, so we cannot see more than a limited neighborhood 
of these nodes.

More specifically, both the 
operations sequence $S_1$, serving $B$,
and $S_2$, serving $B_\pi$,
can each visit a portion of the tree of size at most $t$, 
and that portion must contain the initial
positions of all fingers and $a_1,\ldots,a_b$.
With persistent fingers, we allow a search to start at any finger
(instead of the root), so the visited region is potentially disconnected,
but for each execution sequence, it consists of the union of $k$ subtrees,
since the region explored by one finger 
(before potentially jumping to the location of another finger)
is a connected region.

We now consider (induced subtree of) the union $\mathcal N$ 
of the nodes in these $2k$ regions.
If the result is not a connected graph, 
we arbitrarily connect the components
(attaching one component as the child of any leaf of another),
forming a single connected binary tree $T'$ over $t' \le 2t$ nodes.
$S_1$ and $S_2$ are still valid executions when starting with $(T',I)$ 
instead of $(T,I)$, proving the claim.~
\end{proof}

We now perform a counting argument similar to the first
part of Fredman's proof~\cite{Fredman2011},
but taking into account the locations of fingers as well.

\begin{proof}[\wref{lem:magic-permutation}]
Our goal will be to enumerate and count all possible witnesses $(T,I,B)$
to the ``tameness'' of a some permutation $\pi$ over $[b]$, \ie,
initial configurations such that when starting with tree $T$ and fingers at $I$,
both $B$ and $B_\pi$ require at most $t$ operations to serve.
Since each witness can eliminate at most one candidate for the adversarial permutation,
having fewer than $b!$ witnesses implies the claimed existence of $\pi$;
we will show that for $t$ bounded as in the lemma, this is indeed true.

By definition, for any witness $(T,I,B)$ to the tameness of $\pi$,
there are two operation sequences $S_1$ and $S_2$, both of length at most $t$,
so that $S_1$ serves $B$ and $S_2$ serves $B_\pi$.
Moreover, we can recover both $B$ and $B_\pi$, and hence $\pi$ itself, from
$(T,I,S_1,S_2)$.
We therefore obtain a (crude) over-approximation of the set of witnesses
by counting all such quadruples.
Now, by \wref{lem:small-trees} we can restrict our attention to trees $T$
over $2t$ nodes, and there are no more than $4^{2t}$ such;
there are $(2t)^k$ choices for the initial positions of $k$ pointers,
and $(10k)^t$ options for $S_1$ and $S_2$, for a total of 
\begin{align*}
		W(t)
	&\wwrel=
		2^{4t} \cdot (2t)^k \cdot (10k)^{2t}
\end{align*}
witness candidates.
(The actual number of witnesses is lower because not all of these quadruples encode a valid witness.)
We will now show that for $t = (1-\epsilon) c b\log_{10k}(b)$ with $\epsilon>0$ fixed and 
$c = (2+4/\lg10)^{-1} \approx 0.3121$, we have that
$\lg W(t) < (1-\epsilon) b \lg b$ for large enough $b$, and hence eventually also $W(t) < b!$:
\begin{align*}
		\lg W(t)
	&\wwrel=
		4t + k (\lg(t)+1) + 2t \lg (10k)
\\	&\wwrel{\relwithtext[r]{$[k = o(b)]$}=}
		2t \lg (10k) + 4t + o(b \log(t))
\\	&\wwrel{\relwithtext[r]{[insert t]}\le}
		(2 + \tfrac 4{\lg 10} )c \cdot b\lg(b) + o(b \log(b))
\\	&\wwrel=
		(1-\epsilon) b\lg(b) + o(b \log(b)).
\end{align*}
Thus for $t$ bounded as in the statement and sufficiently large $b$, we have $\lg W(t) < b!$,
so there are fewer than witnesses to tameness than there are permutations, 
and hence an adversarial permutation must exists.
\end{proof}

Fredman~\cite{Fredman2011} used such a permutation to adaptively construct
an adversarial sequence for any online algorithm.
We can readily check that the construction also applies to tournament heaps
with $k$ fingers, and therefore obtain a lower bound for the few-finger case.

\begin{corollary}[Few Fingers]
	\label{cor:small-k}
	The competitiveness ratio of tournament heaps with
	$k$ fingers is at least $\Omega(\log_k{n}) = \Omega(\frac{\log{n}}{\log{k}})$.
\end{corollary}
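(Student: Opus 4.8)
The plan is to run Fredman's adaptive-adversary construction~\cite{Fredman2011} with \wref{lem:magic-permutation} (our $k$-finger, persistent, leaf-oriented strengthening of Fredman's key lemma) plugged in as a black box, against a matching offline strategy that keeps the few touched keys in a caterpillar. First dispose of the trivial regime: if $\log_k(n)=O(1)$ there is nothing to prove, so assume $\log_k(n)\to\infty$; this forces $k=n^{o(1)}$, hence $k=o(n)$, so we may fix a block length $b=\lceil n/2\rceil$, for which $b\le n$, $b\to\infty$, and $k=o(b)$ as \wref{lem:magic-permutation} requires. Let $\pi$ be the ``adversarial'' permutation on $[1\ldots b]$ supplied by that lemma and set $B=(1,2,\ldots,b)$, a list of $b$ distinct accesses available in every tree on $[1\ldots n]$.

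Fix an arbitrary online tournament-heap algorithm $\textsc{Alg}$ with $k$ fingers (\wref{def:model}), which w.l.o.g.\ is deterministic. I would let the adversarial sequence $A$ be a concatenation of $r$ rounds of $b$ accesses, with $r\to\infty$ so that $m=rb\to\infty$ (the competitive ratio is a limit over $m$, so long sequences are needed). Before round $j$, $\textsc{Alg}$ is in a configuration $(T_j,I_j)$ determined by its behaviour on rounds $1,\ldots,j-1$. Applying \wref{lem:magic-permutation} to $(T_j,I_j,B)$ shows that at least one of $B$, $B_\pi$ costs at least $0.3\,b\log_{10k}(b)$ to serve starting from $(T_j,I_j)$, even offline and with $k$ persistent fingers; take round $j$ to be that expensive sequence. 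Since the offline optimum lower-bounds the cost of any algorithm, $\textsc{Alg}$ pays at least $0.3\,b\log_{10k}(b)$ on round $j$, and summing over the $r$ rounds gives $\operatorname{cost}(\textsc{Alg}(A))\ge 0.3\,m\log_{10k}(b)=\Omega(m\log_k n)$. (By the remark after \wref{thm:wilber0} this also bounds the average amortized cost.)

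For the denominator I would argue $\textsc{OfflineOptimum}(A)=O(m)$, i.e.\ amortized $O(1)$ per access on this specific $A$. Since $A$ uses only the $b$ keys $\{1,\ldots,b\}$ and every round is one of two fixed orderings of them, an offline player that knows $A$ can gather those $b$ keys once into a size-$\Theta(b)$ subtree (a one-time cost, absorbed as $r\to\infty$) and then, round by round, reshape that subtree into the caterpillar whose leaf order matches the upcoming round and serve that round by a ``sliding'' rearrangement in which each access only visits a depth-$O(1)$ prefix before the caterpillar invariant is restored---the tournament-tree analogue of Fredman's linear-cost offline strategy. Charging $O(b)$ per round for serving plus reshaping yields $\textsc{OfflineOptimum}(A)=O(rb)=O(m)$, so the competitive ratio is at least $\Omega(\log_{10k}(b))=\Omega(\log_{10k}n)=\Omega(\log_k n)=\Omega(\log n/\log k)$.

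The counting is already packaged inside \wref{lem:magic-permutation} and the choice $b=\Theta(n)$ is routine, so the step I expect to be the main obstacle is checking that Fredman's construction genuinely carries over. Two points need care: that the adversary's per-round choice is always available---this is precisely what \wref{lem:magic-permutation} was set up to guarantee for leaf-oriented trees and $k$ persistent fingers, so one mainly re-reads its proof---and, more delicately, that the offline player really does keep amortized $O(1)$ per access despite the much richer rearrangement repertoire of \wref{def:model} and the need to switch between the two round shapes. The latter is the quantitative core of Fredman's offline analysis, and it is what ``we can readily check'' is carrying.
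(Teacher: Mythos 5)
Your online lower bound is correct and runs parallel to the paper's: invoking \wref{lem:magic-permutation} at the start of each round to let the adversary pick whichever of $B$, $B_\pi$ is expensive from the current configuration is exactly the intended use of that lemma, and with $b=\Theta(n)$ you get $\Omega(m\log_k n)$ online cost. The gap is in the offline bound of $O(m)$, and it is a real one. You have changed Fredman's construction in a way that breaks his offline strategy, and then invoked that strategy as a black box anyway.

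Concretely: the paper (following Fredman) takes $n=b^2$ and uses $b$ \emph{disjoint} blocks $B_i=(i-1)b+[1\ldots b]$, so the whole adversarial sequence $A$ is a single permutation of $[n]$ in which every key is accessed exactly once. That is what makes the offline side cheap: all applications of $\pi$ can be batched into a one-time $O(n)$ preprocessing step (the row-major/transpose/apply-$\pi$-in-parallel/transpose-back maneuver of \wref[Figures]{fig:step1}--\ref{fig:step4}), after which the tree is a path in request order and each access is served by one rotation at the root, with no need to ever revisit a served key. In your version the same $b=\Theta(n)$ keys are reused every round, so the offline player must reshape between rounds. But with one or two \emph{transient} fingers there is no $O(1)$-per-access ``sliding'' maintenance: after serving the root leaf you cannot reattach it at the far end of the caterpillar for free (the finger resets to the root each access), and simply rotating it out of the way, as in the paper, is only fine when the key is never needed again. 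More damagingly, between a $B$-round and a $B_\pi$-round the offline player must apply $\pi$ or $\pi^{-1}$ to a $b$-element caterpillar with $O(1)$ fingers; by the same encoding argument that underlies \wref{thm:wilber0}, applying a generic permutation to a path costs $\Theta(b\log b)$, and \wref{lem:magic-permutation} gives no guarantee that its $\pi$ is one of the few cheap ones. So your offline total is only controlled to $O(m\log n)$, which makes the ratio $\Omega(\log_k n/\log n)$ rather than $\Omega(\log_k n)$ --- not the claimed bound. The fix is simply to keep the disjoint-block construction, which your invocation of \wref{lem:magic-permutation} already supports ($B$ there may be any $b$ distinct keys, not necessarily $1,\ldots,b$), so that the linear-cost offline strategy of the paper applies verbatim.
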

This result even holds when the online algorithm has access to $k$ persistent fingers,
whereas the offline algorithm is restricted to a single transient finger.

\begin{proof}
	We follow Fredman's proof~\cite{Fredman2011}.
	Specifically, we assume $n = b^2$ is a square,
	and let $\pi$ be an ``adversarial'' permutation as given 
	by Lemma~\ref{lem:magic-permutation} with length $b$.
	\Withoutlossofgenerality let the keys in the initial tree 
	by $1,\ldots,n$ in inorder.
	Then for any online algorithm $\textsc{Alg}$, we describe an adversary
	that (adaptively) generates a permutation $A$ on which $\textsc{Alg}$ incurs
	total cost $\Omega(n \log_{k}(n))$. For that, the adversary iteratively considers
	the $b$ elements in $B_i = [(i - 1)b + 1 \dots ib] = (i-1)b + [1\ldots b]$ for $i=1,\ldots,b$, 
	and selects as the next block of requests 
	either $B_i$ itself (\ie, requests in sorted order) or 
	its permuted copy 
	\[
			B_i^\pi 
		\wwrel= 
		\left(i - 1\right)b + \pi\left( 1\right),\;
		\left(i - 1\right)b + \pi\left( 2\right),\;
		\ldots,\;
		\left(i - 1\right)b + \pi\left( b \right).
	\]
	Call the $i$th block of requests $C_i$.
	By \wref{lem:magic-permutation}, there is always a choice for the adversary that makes
	$\textsc{Alg}$ pay $\Omega(b \log_k (b)) = \Omega(\sqrt n \log_k (n))$ on $C_i$,
	for a total cost of $\Omega(n\log_k (n))$ after all $\sqrt n$ blocks in $A$.
	
	It remains to check that the resulting permutation $A$
	can be processed in $O(n)$ operations when known offline, even with a single
	transient finger.
	This is to be contrasted with the (potentially) superconstant number $k$ of 
	(persistent) fingers that the online algorithm was allowed to use.
	
	We again follow Fredman's proof, but greatly simplify the presentation 
	based on the more recent understandings of multi-finger
	search trees~\cite{DemaineILO13,ChalermsookGKMS15}:
	We describe an offline algorithm for \emph{two} (transient) fingers
	instead of a single finger;
	since we can simulate a fixed, constant number of fingers
	with a single one with a constant-factor overhead~\cite{DemaineILO13,ChalermsookGKMS15}, 
	this yields the desired result.
	
	Our strategy to serve $A$ will be to spend $O(n)$ overhead upon
	the first access and thereby transform $T$ into a path with keys sorted
	by next access. All future accesses can then be served by
	simply rotating one edge at the root each.
	Recall that $A$ is the concatenation of $C_1,C_2,\ldots,C_b$,
	where each $C_i$ is either $B_i$ or $B_i^\pi$.
	So we start with the tree that is $1, \ldots, n = b^2$ on a path;
	we can rearrange any initial tree with $O(n)$ rotations into such a path.
	It suffices to apply $\pi$ to the permuted blocks (in the tree)
	to obtain a tree from which $A$ can be served in $O(n)$ steps.
	For this, we can use two fingers to arrange the trees into
	an analog of operations on 2-D arrays:
	\begin{enumerate}
		\item We first transform the single chain into a ``row major'' matrix,
		that is, each set $[(i - 1)b + 1, ib]$ forms a path,
		and the root has a path containing the roots $1, b + 1, 2b + 1, \ldots$ of these block paths.
		\wref{fig:step1} illustrates this step.
		Using two fingers, one for ``reading'' through the path and the other for appending to the current
		row, this transformation is easily accomplished with $O(n)$ operations.
		
		\begin{figure}[htbp]
			\scalebox{.65}{\begin{tikzpicture}[xscale = 1,yscale=1.3]
			\medmuskip=1mu\smaller[0]
			
			\begin{scope}[shift={(-8,1)}]
				\foreach \i/\s/\l in {1/treenode/1,2/treenode/2,3/treenode/3,4/treenode/4,5/empty/$\dots$,6/treenode/$n-1$,7/treenode/$n$} {
					\node[\s] (p\i) at (-1.5*\i,-\i) {\l} ;
				}
				\foreach \i/\ii in {1/2,2/3,3/4,4/5,5/6,6/7} {
					\draw (p\i) -- (p\ii) ;
				}
			\end{scope}
			
			\node[scale=3] at (-9,-3) {$\leadsto$};
			
			\node[treenode] (x1) at (0,0) {$1$} ;
			\node[treenode] (x2) at (-1.5, -1) {$b+1$} ;
			\node[treenode] (x3) at (-3, -2) {$2b+1$} ;
			\node[treenode] (x4) at (-4.5, -3) {$3b+1$} ;
			\node[treenode] (x5) at (-6, -4) {$4b+1$} ;
			\node[empty] (xetc) at (-7.5, -5) {$\ldots$} ;
			
			\draw (x1) to (x2);
			\draw (x2) to (x3);
			\draw (x3) to (x4);
			\draw (x4) to (x5);
			\draw (x5) to (xetc);
			
			\node[treenode] (x11) at (1.5,-1) {$2$} ;
			\node[empty] (x1etc) at (3,-2) {$\ldots$} ;
			\node[treenode] (x1b) at (4.5,-3) {$b$} ;
			\draw (x1) to (x11);
			\draw (x11) to (x1etc);
			\draw (x1etc) to (x1b);
			
			\node[treenode] (x21) at (0,-2) {$b+2$} ;
			\node[empty] (x2etc) at (1.5,-3) {$\ldots$} ;
			\node[treenode] (x2b) at (3,-4) {$2b$} ;
			\draw (x2) to (x21);
			\draw (x21) to (x2etc);
			\draw (x2etc) to (x2b);
	
			\node[treenode] (x31) at (-1.5,-3) {$2b+2$} ;
			\node[empty] (x3etc) at (0,-4) {$\ldots$} ;
			\node[treenode] (x3b) at (1.5,-5) {$3b$} ;
			\draw (x3) to (x31);
			\draw (x31) to (x3etc);
			\draw (x3etc) to (x3b);
	
			\node[treenode] (x41) at (-3,-4) {$3b+2$} ;
			\node[empty] (x4etc) at (-1.5,-5) {$\ldots$} ;
			\node[treenode] (x4b) at (0,-6) {$4b$} ;
			\draw (x4) to (x41);
			\draw (x41) to (x4etc);
			\draw (x4etc) to (x4b);
	
			\node[treenode] (x51) at (-4.5,-5) {$4b+2$} ;
			\node[empty] (x5etc) at (-3,-6) {$\ldots$} ;
			\node[treenode] (x5b) at (-1.5,-7) {$5b$} ;
			\draw (x5) to (x51);
			\draw (x51) to (x5etc);
			\draw (x5etc) to (x5b);
			\end{tikzpicture}}
			\caption{%
				The first step of the transformation, from path to row-major ordered matrix.
				The shaded nodes each consist of one or two internal nodes and a leaf with the stored key.%
			}
			\label{fig:step1}
		\end{figure}
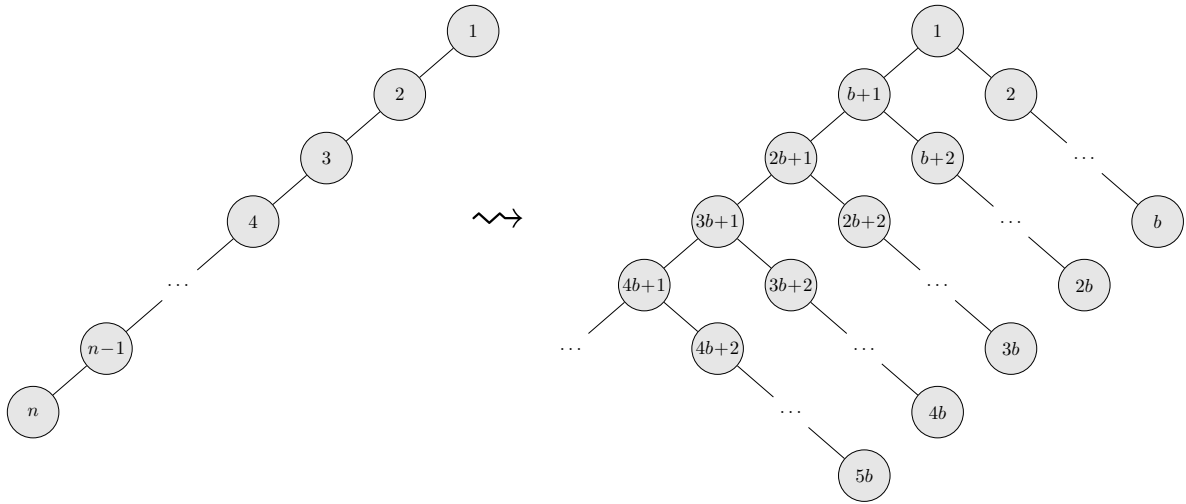
		
		\item Next, extract out all the $i$s for which
		we have to apply $\pi$ to obtain $C_i$.
		We refer this subset as $\widehat{I}$. In the tree, we rearrange
		the path containing the heads of the $B_i$s so that all
		$\widehat{I}$-blocks appears as a prefix; see \wref{fig:step2}
		This task is very similar to a quicksort-style partition on linked lists.
		Using one ``read finger'' and one ``write finger'' to traverse the list of heads
		in parallel, it is achieved with $O(b)$ operations.
		
		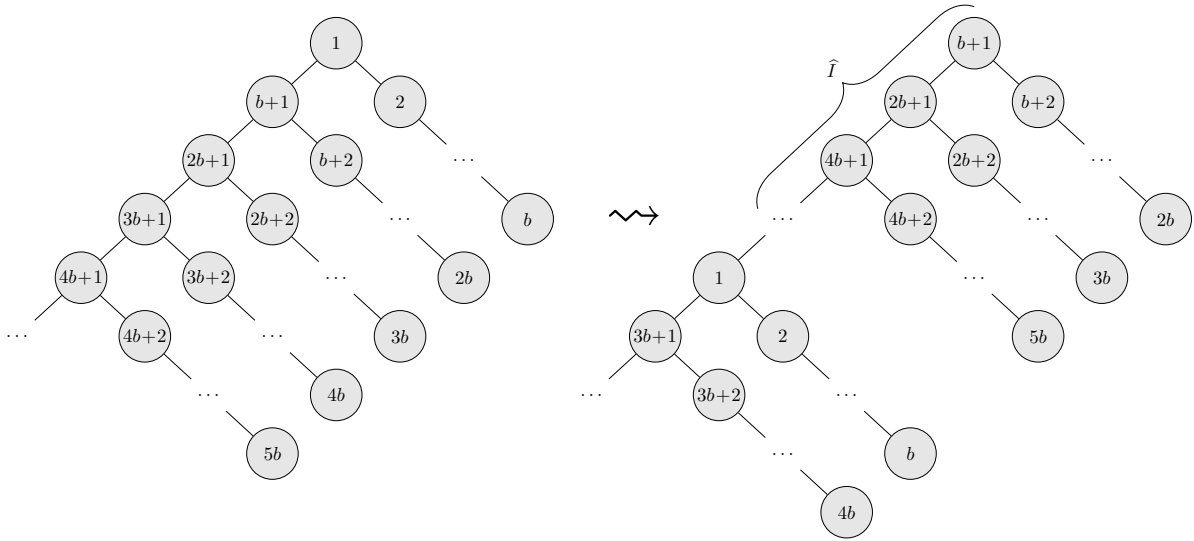
\begin{figure}[htb]
			\scalebox{.65}{\begin{tikzpicture}[xscale = 0.87,yscale=1.2]
				\medmuskip=1mu\smaller[0]
				\node[scale=3] at (-8,-3) {$\leadsto$};
				\begin{scope}[shift={(-15,0)}]
					\node[treenode] (x1) at (0,0) {$1$} ;
					\node[treenode] (x2) at (-1.5, -1) {$b+1$} ;
					\node[treenode] (x3) at (-3, -2) {$2b+1$} ;
					\node[treenode] (x4) at (-4.5, -3) {$3b+1$} ;
					\node[treenode] (x5) at (-6, -4) {$4b+1$} ;
					\node[empty] (xetc) at (-7.5, -5) {$\ldots$} ;
					
					\draw (x1) to (x2);
					\draw (x2) to (x3);
					\draw (x3) to (x4);
					\draw (x4) to (x5);
					\draw (x5) to (xetc);
					
					\node[treenode] (x11) at (1.5,-1) {$2$} ;
					\node[empty] (x1etc) at (3,-2) {$\ldots$} ;
					\node[treenode] (x1b) at (4.5,-3) {$b$} ;
					\draw (x1) to (x11);
					\draw (x11) to (x1etc);
					\draw (x1etc) to (x1b);
					
					\node[treenode] (x21) at (0,-2) {$b+2$} ;
					\node[empty] (x2etc) at (1.5,-3) {$\ldots$} ;
					\node[treenode] (x2b) at (3,-4) {$2b$} ;
					\draw (x2) to (x21);
					\draw (x21) to (x2etc);
					\draw (x2etc) to (x2b);
			
					\node[treenode] (x31) at (-1.5,-3) {$2b+2$} ;
					\node[empty] (x3etc) at (0,-4) {$\ldots$} ;
					\node[treenode] (x3b) at (1.5,-5) {$3b$} ;
					\draw (x3) to (x31);
					\draw (x31) to (x3etc);
					\draw (x3etc) to (x3b);
			
					\node[treenode] (x41) at (-3,-4) {$3b+2$} ;
					\node[empty] (x4etc) at (-1.5,-5) {$\ldots$} ;
					\node[treenode] (x4b) at (0,-6) {$4b$} ;
					\draw (x4) to (x41);
					\draw (x41) to (x4etc);
					\draw (x4etc) to (x4b);
			
					\node[treenode] (x51) at (-4.5,-5) {$4b+2$} ;
					\node[empty] (x5etc) at (-3,-6) {$\ldots$} ;
					\node[treenode] (x5b) at (-1.5,-7) {$5b$} ;
					\draw (x5) to (x51);
					\draw (x51) to (x5etc);
					\draw (x5etc) to (x5b);
				\end{scope}
				
				\node[treenode] (x2) at (0,0) {$b+1$} ;
				\node[treenode] (x3) at (-1.5, -1) {$2b+1$} ;
				\node[treenode] (x5) at (-3, -2) {$4b+1$} ;
				\node[empty] (xetc1) at (-4.5, -3) {$\ldots$} ;
				\node[treenode] (x1) at (-6, -4) {$1$} ;
				\node[treenode] (x4) at (-7.5, -5) {$3b+1$} ;
				\node[empty] (xetc2) at (-9, -6) {$\ldots$} ;
				
				\draw (x2) to (x3);
				\draw (x3) to (x5);
				\draw (x5) to (xetc1);
				\draw (xetc1) to (x1);
				\draw (x1) to (x4);
				\draw (x4) to (xetc2);
				
				\draw[decorate,decoration={brace,mirror,amplitude=20pt}] 
					($(x2.north) + (0,5pt)$) -- ($(xetc1.west)+(-5pt,5pt)$) node [black,midway,xshift=-20pt,yshift=23pt]{$\widehat{I}$};
				
				\node[treenode] (x21) at (1.5,-1) {$b+2$};
				\node[empty] (x2etc) at (3,-2) {$\ldots$};
				\node[treenode] (x2b) at (4.5,-3) {$2b$};
		
				\node[treenode] (x31) at (0,-2) {$2b+2$};
				\node[empty] (x3etc) at (1.5,-3) {$\ldots$};
				\node[treenode] (x3b) at (3,-4) {$3b$};
		
				\node[treenode] (x51) at (-1.5,-3) {$4b+2$};
				\node[empty] (x5etc) at (0,-4) {$\ldots$};
				\node[treenode] (x5b) at (1.5,-5) {$5b$};

				\node[treenode] (x11) at (-4.5,-5) {$2$};
				\node[empty] (x1etc) at (-3,-6) {$\ldots$};
				\node[treenode] (x1b) at (-1.5,-7) {$b$};
				
				\node[treenode] (x41) at (-6,-6) {$3b+2$};
				\node[empty] (x4etc) at (-4.5,-7) {$\ldots$};
				\node[treenode] (x4b) at (-3,-8) {$4b$};
				
				\draw (x1) to (x11);
				\draw (x11) to (x1etc);
				\draw (x1etc) to (x1b);
				\draw (x2) to (x21);
				\draw (x21) to (x2etc);
				\draw (x2etc) to (x2b);
				\draw (x3) to (x31);
				\draw (x31) to (x3etc);
				\draw (x3etc) to (x3b);
				\draw (x4) to (x41);
				\draw (x41) to (x4etc);
				\draw (x4etc) to (x4b);
				\draw (x5) to (x51);
				\draw (x51) to (x5etc);
				\draw (x5etc) to (x5b);			
				
			\end{tikzpicture}}
			\caption{%
				Step 2 of the transformation: separating permuted and non-permuted blocks.
				In the example, the indices of permuted blocks is $\widehat{I} = \{2, 3, 5, \ldots \}$.
			}
			\label{fig:step2}
		\end{figure}
		
		\item Now we ``transpose''
		this prefix into a ``column-major'' ordering:
		there is a path starting from the root containing all values
		of $1 \leq j \leq b$, and all elements of the form
		$[ib + j]$ for $i \in \widehat{I}$ are attached to $j$
		in a path; see \wref{fig:step3} for an example.
		
		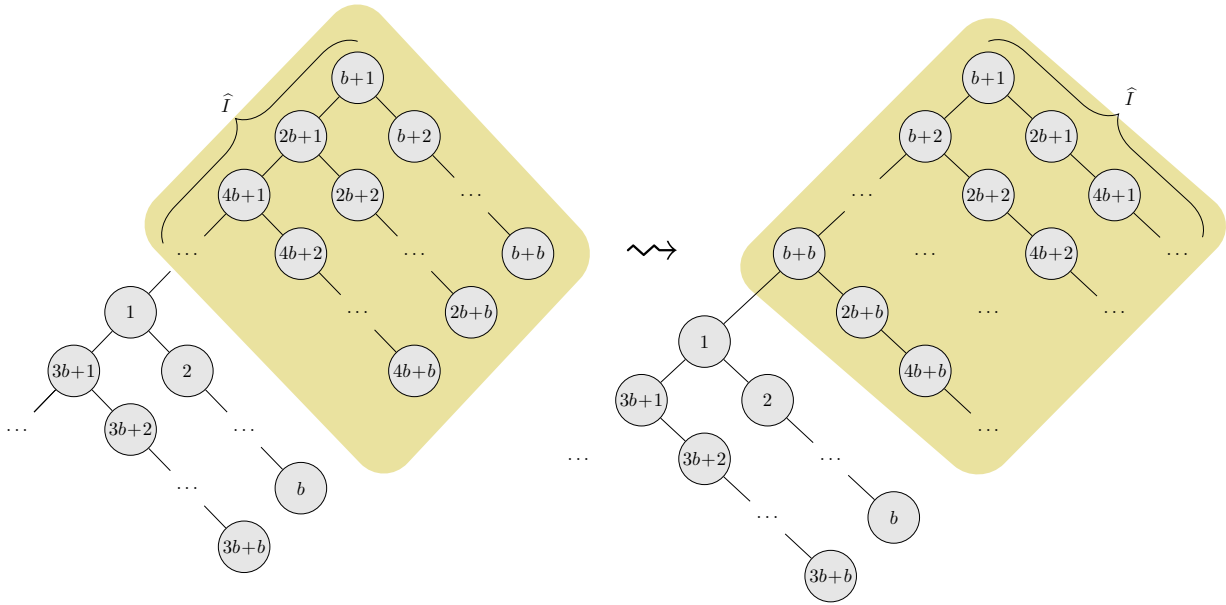
\begin{figure}[htb]
			\scalebox{.65}{\begin{tikzpicture}[xscale = 0.86,yscale=1.2]
				\medmuskip=1mu\smaller[0]
				\node[scale=3] at (-8,-3) {$\leadsto$};
				\colorlet{mm}{yellow!80!black!50}
				
				\begin{scope}[shift={(-15,0)},xscale=.9]
					
					\fill[mm,rounded corners=20pt] 
						(0,1.5) -- (-6,-2.5) -- (0.7,-7) -- (6.5,-3) --cycle
					;
					
					\node[treenode] (x2) at (0,0) {$b+1$} ;
					\node[treenode] (x3) at (-1.5, -1) {$2b+1$} ;
					\node[treenode] (x5) at (-3, -2) {$4b+1$} ;
					\node[empty] (xetc1) at (-4.5, -3) {$\ldots$} ;
					\node[treenode] (x1) at (-6, -4) {$1$} ;
					\node[treenode] (x4) at (-7.5, -5) {$3b+1$} ;
					\node[empty] (xetc2) at (-9, -6) {$\ldots$} ;
					
					\draw (x2) to (x3);
					\draw (x3) to (x5);
					\draw (x5) to (xetc1);
					\draw (xetc1) to (x1);
					\draw (x1) to (x4);
					\draw (x4) to (xetc2);
					
					\draw[decorate,decoration={brace,mirror,amplitude=20pt}] 
						($(x2.north) + (0,5pt)$) -- ($(xetc1.west)+(-5pt,5pt)$) node [black,midway,xshift=-20pt,yshift=23pt]{$\widehat{I}$};
					
					\node[treenode] (x21) at (1.5,-1) {$b+2$};
					\node[empty] (x2etc) at (3,-2) {$\ldots$};
					\node[treenode] (x2b) at (4.5,-3) {$b+b$};
			
					\node[treenode] (x31) at (0,-2) {$2b+2$};
					\node[empty] (x3etc) at (1.5,-3) {$\ldots$};
					\node[treenode] (x3b) at (3,-4) {$2b+b$};
			
					\node[treenode] (x51) at (-1.5,-3) {$4b+2$};
					\node[empty] (x5etc) at (0,-4) {$\ldots$};
					\node[treenode] (x5b) at (1.5,-5) {$4b+b$};

					\node[treenode] (x11) at (-4.5,-5) {$2$};
					\node[empty] (x1etc) at (-3,-6) {$\ldots$};
					\node[treenode] (x1b) at (-1.5,-7) {$b$};
					
					\node[treenode] (x41) at (-6,-6) {$3b+2$};
					\node[empty] (x4etc) at (-4.5,-7) {$\ldots$};
					\node[treenode] (x4b) at (-3,-8) {$3b+b$};
					
					\draw (x1) to (x11);
					\draw (x11) to (x1etc);
					\draw (x1etc) to (x1b);
					\draw (x2) to (x21);
					\draw (x21) to (x2etc);
					\draw (x2etc) to (x2b);
					\draw (x3) to (x31);
					\draw (x31) to (x3etc);
					\draw (x3etc) to (x3b);
					\draw (x4) to (x41);
					\draw (x41) to (x4etc);
					\draw (x4etc) to (x4b);
					\draw (x5) to (x51);
					\draw (x51) to (x5etc);
					\draw (x5etc) to (x5b);			
				\end{scope}

				\fill[mm,rounded corners=20pt] 
					(0,1.25) -- (-6.25,-3.25) -- (-.2,-7) -- (6,-2.5) --cycle
				;
	
				\node[treenode] (x2) at (0,0) {$b+1$} ;
				\node[treenode] (x3) at (1.5, -1) {$2b+1$} ;
				\node[treenode] (x5) at (3, -2) {$4b+1$} ;
				\node[empty] (xetc1) at (4.5, -3) {$\ldots$} ;
				
				\node[treenode] (x21) at (-1.5,-1) {$b+2$};
				\node[empty] (x2etc) at (-3,-2) {$\ldots$};
				\node[treenode] (x2b) at (-4.5,-3) {$b+b$};
		
				\draw (x2) to (x3);
				\draw (x3) to (x5);
				\draw (x5) to (xetc1);
				
				\draw (x4) to (xetc2);
				
				\draw[decorate,decoration={brace,amplitude=20pt}] 
					($(x2.north)+(0,5pt)$) -- ($(xetc1.east)+(5pt,8pt)$) 
					node [black,midway,xshift=20pt,yshift=25pt]{$\widehat{I}$};
						
				\node[treenode] (x31) at (0,-2) {$2b+2$};
				\node[empty] (x3etc) at (-1.5,-3) {$\ldots$};
				\node[treenode] (x3b) at (-3,-4) {$2b+b$};
				
				\node[treenode] (x51) at (1.5,-3) {$4b+2$};
				\node[empty] (x5etc) at (0,-4) {$\ldots$};
				\node[treenode] (x5b) at (-1.5,-5) {$4b+b$};
				
				\node[empty] (xetc1) at (3,-4) {$\ldots$};
				\node[empty] (xetcb) at (0,-6) {$\ldots$};

				\begin{scope}[shift={(-.75,-.5)}]
					\node[treenode] (x1) at (-6, -4) {$1$} ;
					\node[treenode] (x4) at (-7.5, -5) {$3b+1$} ;
					\node[empty] (xetc2) at (-9, -6) {$\ldots$} ;
					\draw (x2b) to (x1);
					\draw (x1) to (x4);
					
					\node[treenode] (x11) at (-4.5,-5) {$2$};
					\node[empty] (x1etc) at (-3,-6) {$\ldots$};
					\node[treenode] (x1b) at (-1.5,-7) {$b$};
					
					\node[treenode] (x41) at (-6,-6) {$3b+2$};
					\node[empty] (x4etc) at (-4.5,-7) {$\ldots$};
					\node[treenode] (x4b) at (-3,-8) {$3b+b$};
					
					\draw (x2) to (x21);
					\draw (x21) to (x2etc);
					\draw (x2etc) to (x2b);
					
					\draw (x21) to (x31);
					\draw (x31) to (x51);
					\draw (x51) to (xetc1);
				
					\draw (x2b) to (x3b);
					\draw (x3b) to (x5b);
					\draw (x5b) to (xetcb);
					
					\draw (x1) to (x11);
					\draw (x11) to (x1etc);
					\draw (x1etc) to (x1b);
					\draw (x4) to (x41);
					\draw (x41) to (x4etc);
					\draw (x4etc) to (x4b);
				\end{scope}
			\end{tikzpicture}}
			\caption{%
				Step 3 of the transformation: transposing the $\widehat I$ rows.
				The transposed part is highlighted; the other blocks remain unchanged.
			}
			\label{fig:step3}
		\end{figure}
		
		\item Now we apply $\pi$ to the $b$ first path heads, thereby applying $\pi$ in parallel
			to all blocks in $\widehat I$.
			The important observation is that it is the same permutation $\pi$ that has to be applied
			to all paths, so we can do it in one shot after the above preparation.
			An arbitrary permutation of $b$ nodes can be applied with 
			$O(b \log{b}) = O(n)$ operations (see, \eg, \wref{lem:larger-k-perm} below);
			here it would actually be sufficient to simulate, say, bubble sort using two fingers.
			\wref{fig:step4} shows the result.
		
		\begin{figure}[htb]
			\scalebox{.65}{\begin{tikzpicture}[xscale = 0.83,yscale=1.2]
				\medmuskip=1mu\smaller[0]
				\node[scale=3] at (-8,-3) {$\leadsto$};
				\colorlet{mm}{yellow!80!black!50}
				
				\begin{scope}[shift={(-15,0)}]
					\node[treenode] (x2) at (0,0) {$b+1$} ;
					\node[treenode] (x3) at (1.5, -1) {$2b+1$} ;
					\node[treenode] (x5) at (3, -2) {$4b+1$} ;
					\node[empty] (xetc1) at (4.5, -3) {$\ldots$} ;
					
					\node[treenode] (x21) at (-1.5,-1) {$b+2$};
					\node[empty] (x2etc) at (-3,-2) {$\ldots$};
					\node[treenode] (x2b) at (-4.5,-3) {$b+b$};
			
					\draw (x2) to (x3);
					\draw (x3) to (x5);
					\draw (x5) to (xetc1);
					\draw (x4) to (xetc2);
					
					\draw[decorate,decoration={brace,amplitude=20pt}] 
						($(x2.north)+(0,5pt)$) -- ($(xetc1.east)+(5pt,8pt)$) 
						node [black,midway,xshift=20pt,yshift=25pt]{$\widehat{I}$};
							
					\node[treenode] (x31) at (0,-2) {$2b+2$};
					\node[empty] (x3etc) at (-1.5,-3) {$\ldots$};
					\node[treenode] (x3b) at (-3,-4) {$2b+b$};
					
					\node[treenode] (x51) at (1.5,-3) {$4b+2$};
					\node[empty] (x5etc) at (0,-4) {$\ldots$};
					\node[treenode] (x5b) at (-1.5,-5) {$4b+b$};
					
					\node[empty] (xetc1) at (3,-4) {$\ldots$};
					\node[empty] (xetcb) at (0,-6) {$\ldots$};

					\begin{scope}[shift={(-.75,-.5)}]
						\node[treenode] (x1) at (-6, -4) {$1$} ;
						\node[treenode] (x4) at (-7.5, -5) {$3b+1$} ;
						\node[empty] (xetc2) at (-9, -6) {$\ldots$} ;
						\draw (x2b) to (x1);
						\draw (x1) to (x4);
						
						\node[treenode] (x11) at (-4.5,-5) {$2$};
						\node[empty] (x1etc) at (-3,-6) {$\ldots$};
						\node[treenode] (x1b) at (-1.5,-7) {$b$};
						
						\node[treenode] (x41) at (-6,-6) {$3b+2$};
						\node[empty] (x4etc) at (-4.5,-7) {$\ldots$};
						\node[treenode] (x4b) at (-3,-8) {$3b+b$};
						
						\draw (x2) to (x21);
						\draw (x21) to (x2etc);
						\draw (x2etc) to (x2b);
						
						\draw (x21) to (x31);
						\draw (x31) to (x51);
						\draw (x51) to (xetc1);
					
						\draw (x2b) to (x3b);
						\draw (x3b) to (x5b);
						\draw (x5b) to (xetcb);
						
						\draw (x1) to (x11);
						\draw (x11) to (x1etc);
						\draw (x1etc) to (x1b);
						\draw (x4) to (x41);
						\draw (x41) to (x4etc);
						\draw (x4etc) to (x4b);
					\end{scope}
				\end{scope}
				
			\medmuskip=0mu
			\tikzset{
				treenode/.append style={},
				streenode/.style={treenode,minimum size=37pt,font=\smaller},
			}
			
			\begin{scope}[shift={(0,0)}]
				\node[streenode] (x2) at (0,0) {$b+\pi(1)$} ;
				\node[streenode] (x3) at (1.5, -1) {$2b+\pi(1)$} ;
				\node[streenode] (x5) at (3, -2) {$4b+\pi(1)$} ;
				\node[empty] (xetc1) at (4.5, -3) {$\ldots$} ;
				
				\node[streenode] (x21) at (-1.5,-1) {$b+\pi(2)$};
				\node[empty] (x2etc) at (-3,-2) {$\ldots$};
				\node[streenode] (x2b) at (-4.5,-3) {$b+\pi(b)$};
		
				\draw (x2) to (x3);
				\draw (x3) to (x5);
				\draw (x5) to (xetc1);
				\draw (x4) to (xetc2);
				
				\draw[decorate,decoration={brace,amplitude=20pt}] 
					($(x2.north)+(0,5pt)$) -- ($(xetc1.east)+(5pt,8pt)$) 
					node [black,midway,xshift=20pt,yshift=25pt]{$\widehat{I}$};
						
				\node[streenode] (x31) at (0,-2) {$2b+\pi(2)$};
				\node[empty] (x3etc) at (-1.5,-3) {$\ldots$};
				\node[streenode] (x3b) at (-3,-4) {$2b+\pi(b)$};
				
				\node[streenode] (x51) at (1.5,-3) {$4b+\pi(2)$};
				\node[empty] (x5etc) at (0,-4) {$\ldots$};
				\node[streenode] (x5b) at (-1.5,-5) {$4b+\pi(b)$};
				
				\node[empty] (xetc1) at (3,-4) {$\ldots$};
				\node[empty] (xetcb) at (0,-6) {$\ldots$};

				\begin{scope}[shift={(-.75,-.5)}]
					\node[treenode] (x1) at (-6, -4) {$1$} ;
					\node[treenode] (x4) at (-7.5, -5) {$3b+1$} ;
					\node[empty] (xetc2) at (-9, -6) {$\ldots$} ;
					\draw (x2b) to (x1);
					\draw (x1) to (x4);
					
					\node[treenode] (x11) at (-4.5,-5) {$2$};
					\node[empty] (x1etc) at (-3,-6) {$\ldots$};
					\node[treenode] (x1b) at (-1.5,-7) {$b$};
					
					\node[treenode] (x41) at (-6,-6) {$3b+2$};
					\node[empty] (x4etc) at (-4.5,-7) {$\ldots$};
					\node[treenode] (x4b) at (-3,-8) {$3b+b$};
					
					\draw (x2) to (x21);
					\draw (x21) to (x2etc);
					\draw (x2etc) to (x2b);
					
					\draw (x21) to (x31);
					\draw (x31) to (x51);
					\draw (x51) to (xetc1);
				
					\draw (x2b) to (x3b);
					\draw (x3b) to (x5b);
					\draw (x5b) to (xetcb);
					
					\draw (x1) to (x11);
					\draw (x11) to (x1etc);
					\draw (x1etc) to (x1b);
					\draw (x4) to (x41);
					\draw (x41) to (x4etc);
					\draw (x4etc) to (x4b);
				\end{scope}
			\end{scope}
			\end{tikzpicture}}
			\caption{%
				Step 4 of the transformation: Applying $\pi$ in parallel to all $\widehat I$-blocks.
				This step only affects the first $b$ paths; the other blocks remain unchanged.
			}
			\label{fig:step4}
		\end{figure}
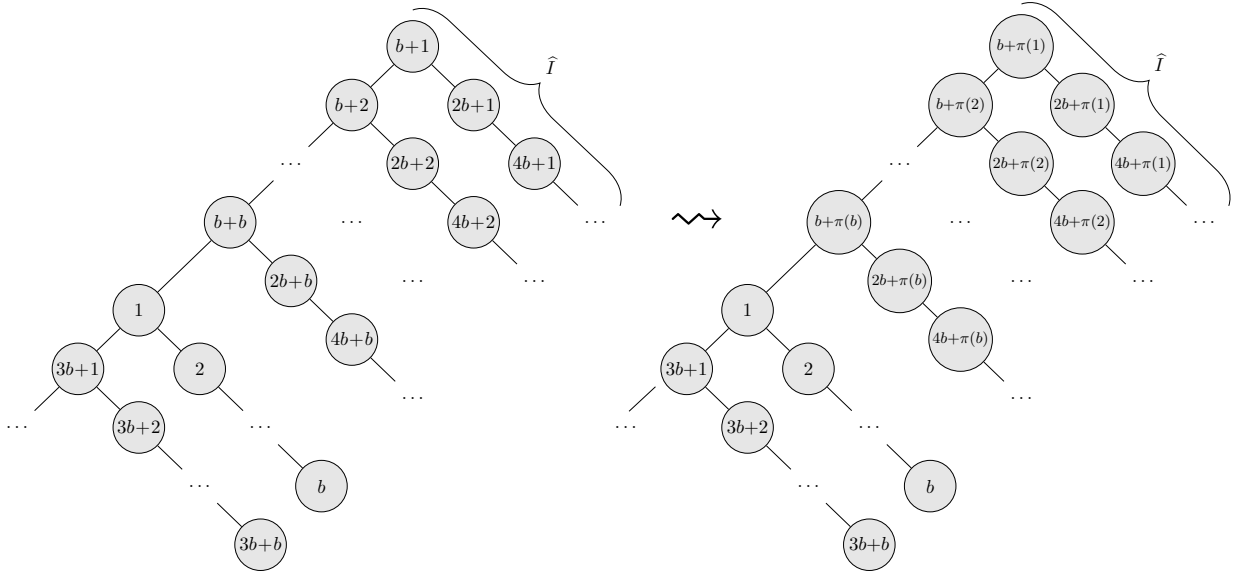
		
		\item Reversing the first two operations then ``stitches''
		this transformed sub-permutation back in the original,
		and has the overall effect of replacing $ib + j$
		by $ib + \pi(j)$ for all $i \in \widehat{I}$.
		The reversal of the transformations can be achieved at the same cost as argued above.
	\end{enumerate}
	
	In total, we have shown how to serve $A$ in $O(n)$ time offline, even with a single transient finger.
	Hence, the competitive ratio of $\textsc{Alg}$ is at least
	\[
		\Omega\left(	\frac{n \log_{k}n}{n} \right)
	\wwrel\geq
		\Omega\left( \log_{k}(n) \right).
	\]
\end{proof}

One can obviously repeat this process to obtain arbitrarily long access sequences with 
the same competitive ratio.

\subsection{Many Fingers}

A large number of fingers allow
us to efficiently implement arbitrary permutations, and hence, an 
\emph{order-by-next request} approach.
We consider here the simple case that the input is a permutation and 
show that we handle it far more efficiently than the
average cost of accessing a node in a tree with $n$ nodes.
This subroutine also forms the basis of our efficient offline
algorithm for arbitrary access sequences in \wref{sec:loglog}.

\begin{lemma}[Permute]
	\label{lem:larger-k-perm}
	Using the operations given in Definition~\ref{def:model} on
	$k$ fingers, any tournament tree on $n$ keys can be rearranged
	into a path, with keys ordered by the next-request times, 
	at a cost of $O(n \log_{k}(n))$.
\end{lemma}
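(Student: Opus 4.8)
The plan is to carry out the rearrangement as an \emph{LSD radix sort} on the next-request times. A $k$-way merge sort will not do: extracting the minimum among $k$ sorted path heads costs $\Theta(\log k)$ per element, so a merge sort would need $\Theta(n\log k\cdot\log_k n)=\Theta(n\log n)$ operations. Radix sort avoids comparisons entirely, and that is exactly what buys the $\log_k$ in the denominator; the matching lower bound is \wref{thm:wilber0}.

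First, a reduction to integer keys. For each key $x$, let $\operatorname{next}(x)$ be the index of its next request (and, say, $+\infty$ if there is none), breaking ties arbitrarily. Only the rank of $\operatorname{next}(x)$ within $\{1,\dots,n\}$ matters for the target ordering, so we may assume we are sorting the $n$ nodes by distinct integer priorities $p(x)\in\{0,\dots,n-1\}$, with never-requested keys pushed to the end. Since the algorithm works offline, the values $p(x)$ are known in advance; and under the cost model of \wref{def:model} only the six pointer operations are charged, so computing $p(x)$ and reading any base-$r$ digit of it is free internal bookkeeping.

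Fix the radix $r:=\max(2,k-1)$ and write each $p(x)$ with $d:=\lceil\log_r n\rceil=O(\log_k n)$ digits in base $r$. Maintain the current node collection as a single right-going path whose head is the root. For $i=1,\dots,d$, from the least to the most significant digit, run one \emph{distribution pass}: walk the current path from its head with a \emph{read finger}, while maintaining $r$ further fingers, one at the tail of each of $r$ initially-empty ``bucket'' paths; for each node $v$ encountered, detach $v$ from the path (retaining a handle on the remainder), append it after the tail of the bucket indexed by the $i$-th base-$r$ digit of $p(v)$, and advance that bucket's tail finger; once the path is exhausted, splice the $r$ bucket paths together in order $0,1,\dots,r-1$ into the new current path. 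Each pass touches every node a bounded number of times and performs $O(r)$ splices, hence costs $O(n)$ operations. A read finger together with $r$ bucket fingers is at most $\max(3,k)$ fingers, which is $\le k$ whenever $k\ge 3$; for $k\le 2$ we obtain the $3$ fingers a radix-$2$ pass needs via the constant-overhead simulation of a constant number of fingers by a single one~\cite{DemaineILO13,ChalermsookGKMS15}. By the standard correctness of LSD radix sort — each pass being a stable sort on one digit — after all $d$ passes the path lists the nodes in increasing order of $p(\cdot)$, i.e.\ by next-request time, with the soonest-requested key at the root. The total cost is $O(d\cdot n)=O(n\log_k n)$.

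The only point requiring a little care is that a single distribution pass really can be executed in $O(n)$ operations with the available fingers and only the primitives of \wref{def:model}: peeling the head off the working path while keeping a handle on its continuation, and appending a lone node to the (possibly empty) tail of a bucket while respecting the null-child precondition of the attach operation. Keeping all intermediate structures as right-going paths (so the left-child slots are null and free to be used) makes this routine — it duplicates the path surgery already performed in Steps~1--4 of the proof of \wref{cor:small-k} — so I would only sketch it rather than spell out every pointer move.
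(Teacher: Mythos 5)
Your radix-sort approach is correct and achieves the stated $O(n\log_k n)$ bound, but it takes a genuinely different route from the paper. The paper proceeds by $k$-way mergesort: find edge separators that break the tournament tree into $k$ pieces of size $O(n/k)$ (at $O(n)$ finger-movement cost), recursively rearrange each piece into a path sorted by next-request time, and merge the $k$ resulting paths by advancing $k$ fingers in parallel, at each step consuming whichever head has the smallest next-request time. This yields the recurrence $C(n)\le k\,C(n/k+1)+O(n)$, which solves to $O(n\log_k n)$.

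Your opening objection to the mergesort route is mistaken, though. You argue that ``extracting the minimum among $k$ sorted path heads costs $\Theta(\log k)$ per element,'' but in the cost model of Definition~\ref{def:model} only the six pointer operations are charged; comparisons, minimum-finding, and any other bookkeeping are free, and in any case an offline algorithm knows all next-request times in advance and can precompute the entire merged order before touching a single pointer. Each element in the merge is peeled off a path head and spliced onto the output with $O(1)$ finger operations, so the $k$-way merge really costs $O(n)$, not $O(n\log k)$ --- the very accounting convention you invoke to make reading base-$r$ digits free is what makes the merge comparisons free as well. With that correction, both approaches give $O(n\log_k n)$; yours is iterative where the paper's is recursive, which arguably makes the pointer surgery a little more concrete, while the paper's avoids the reduction to integer ranks. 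One small omission in your write-up: you should also account for the $O(n)$ initial pass that converts an arbitrary tournament tree into the right-going path your loop invariant assumes, as the paper does explicitly before its analogous step.
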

\begin{proof}
We use the $k$ fingers to simulate $k$-way (external / linked-list based) 
mergesort~\cite[\S5.4.1]{Knuth98a:book},
sorting elements by their target position in the requested permutation.

More specifically, we proceed as follows.
Because the tree is binary, we can find edge separators that
break it into pieces (which are also trees) of size $O(n / k )$.
Thus, with an overhead of $O(n)$ to move the fingers to these
subtrees, plus recursively calling this arrangement procedure
$k$ times on trees of size $O(n / k)$, we can arrange the tree
into $k$ paths attached to the root, each sorted by next request
times.

Then we merge these $k$ paths by advancing $k$ fingers,
one along each path: at each step we simply take the path
whose head contains the lowest next request time.
This once again incurs an overhead of $O(n)$, which means
the overall cost is bounded by the recurrence
\[
C\left( n \right)
\wwrel\leq
k C\left( \frac{n}{k} + 1 \right) + O\left( n \right)
\]
which solves to $O(n \log_{k}n)$.
\end{proof}

This says that any access sequence that is a permutation
is easy to handle for tournament heaps,
and in turn implies a lower bound
on the competitive ratios on the many-fingers case.
\begin{corollary}[Many fingers]
	\label{cor:large-k}
	The competitive ratio of tournament heaps with
	$k$ transient fingers is at least $\Omega(\log k)$.
\end{corollary}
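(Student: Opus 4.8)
The plan is to combine the offline upper bound of \wref{lem:larger-k-perm} with a simple adaptive-adversary lower bound that is \emph{insensitive} to the number of fingers, exploiting the fact that transient fingers are reset to the root before every access. Concretely, I would take the hard input to be a long concatenation $A = \sigma_1\sigma_2\cdots\sigma_r$ of permutations of $[n]$, built round by round against the given online algorithm: the online cost will be $\Omega(r\cdot n\log n)$, whereas \wref{lem:larger-k-perm} lets an offline algorithm serve the same input with $O(r\cdot n\log_k n)$ operations, yielding a ratio of $\Omega(\log n/\log_k n) = \Omega(\log k)$.

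For the online lower bound, fix an online algorithm $\textsc{Alg}$ with $k$ transient fingers (which we may assume deterministic). Since every access begins with all fingers at the root, serving an access to a key whose leaf currently sits at depth $d$ costs at least $d$; this is the one place the argument uses the geometry of the tree rather than a counting bound, and it is exactly why a superpolynomial number of fingers does not help. In any binary tree with $n$ leaves, at most $2^{\lfloor(\lg n)/2\rfloor}\le\sqrt n$ leaves lie at depth below $\frac12\lg n$. Hence, in each round the adversary proceeds as follows: for the first $n-\lceil\sqrt n\rceil$ requests it can always find a not-yet-requested key whose leaf currently has depth $\ge\frac12\lg n$ (at least $\sqrt n$ keys of the round are still unrequested, but at most $\sqrt n$ leaves are shallow) and requests it, and it requests the remaining $\lceil\sqrt n\rceil$ keys of the round in an arbitrary order. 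Each round thus forces $\textsc{Alg}$ to pay at least $\frac12(n-\lceil\sqrt n\rceil)\lg n = \Omega(n\log n)$, so $\textsc{Alg}$ pays $\Omega(rn\log n)$ on $A$.

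For the offline side, serve each round $\sigma_j$ in two steps: first apply \wref{lem:larger-k-perm} to the current tree, rearranging it into a path whose leaves appear in exactly the order in which $\sigma_j$ requests them, at cost $O(n\log_k n)$; then walk a single finger from the root down this path, issuing the serve operation as the finger passes each leaf. Because the leaves lie in request order, this single top-down sweep serves $\sigma_j(1),\sigma_j(2),\ldots$ in the correct order and costs only $O(n)$ (it does not even modify the tree). Summed over the $r$ rounds this is $O(rn\log_k n)$, so the competitive ratio is at least $\Omega(rn\log n)/O(rn\log_k n) = \Omega(\log n/\log_k n) = \Omega(\log k)$, and since $r$ can be taken arbitrarily large this bounds the competitive ratio in the sense defined in \wref{sec:background}. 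The only step needing care — and the main obstacle — is the ``cost $\ge$ current depth'' claim: one must check that the subtree-swap and detach/reattach operations of \wref{def:model} cannot smuggle a deep leaf up to a finger more cheaply than walking down to it, which holds because those operations can only act on subtrees whose root is already occupied by a finger, so establishing a finger at a leaf of depth $d$ still requires $d$ pointer moves in total.
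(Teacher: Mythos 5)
Your high-level structure matches the paper's: bound the online cost from below by observing that most leaves are $\Omega(\log n)$ deep and every access starts with all fingers at the root, bound the offline cost from above via \wref{lem:larger-k-perm}, and take the quotient. Your online side is actually spelled out more carefully than the paper's one-line version (you construct an adversarial concatenation of permutations and use the "at most $\sqrt n$ leaves are shallow" count so the adversary can always find a deep, not-yet-requested leaf), and that part is fine.

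The gap is on the offline side, and it is a real one. You propose to permute the tree into a path ordered by request time and then "walk a single finger from the root down this path, issuing the serve operation as the finger passes each leaf," claiming cost $O(n)$ and "it does not even modify the tree." This contradicts the model: \wref{def:model} stipulates that \emph{each access} begins with all fingers reset to the root, and transient fingers are explicitly forgotten between accesses. So on a static path, the $i$th request of the round has its leaf at depth $\Theta(i)$, and the round costs $\Theta(n^2)$, not $O(n)$ — far worse than the online algorithm, which would reverse the inequality you need. The correct offline strategy, which is what the paper uses in the proof of \wref{cor:small-k}, is to serve each access at depth $O(1)$ and then spend $O(1)$ additional operations rotating (or detaching the just-served leaf and reattaching it at the tail) so that the \emph{next}-requested leaf is again at constant depth. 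With that fix your round cost becomes $O(n\log_k n)$ for the permute plus $O(n)$ for the $n$ constant-cost accesses, and the rest of your ratio computation goes through.

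One smaller remark: your final paragraph asserts that the subtree-swap and detach operations "can only act on subtrees whose root is already occupied by a finger." That is only true for one side of the swap (the $F_0$ side); the other side is a \emph{child} of a finger position, not a finger position itself. The conclusion you want — that reaching a leaf currently at depth $d$ still costs $\Omega(d)$ — is the same implicit assumption the paper makes (and is what motivates the tournament-tree propagation requirement in the first place), but the one-line justification as written is not quite right and would need a slightly more careful potential/reachability argument if you wanted to make it airtight.
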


\begin{proof}
	Since we always start at the root, in any tree with $n$
	nodes, there is a node whose cost is at least $\Omega(\log{n})$.
	Incorporating Lemma~\ref{lem:k-lower} gives that the competitive ratio is at least
	\[
	\frac{\Omega\left( \log{n} \right)}{O\left( \log_{k}(n) \right)}
	\wwrel=
	\Omega(\log {k}).
	\]
\end{proof}

\subsection{Putting Things Together}
	
We can now put together the bounds shown above to
prove our main result.

\begin{proof}[\wref{thm:Main}]
	Combining \wref[Corollaries]{cor:small-k}
	and~\ref{cor:large-k} gives that the overall
	competitive ratio is at least
	\[
	\Omega\left(\max\left\{ \frac{\log{n}}{\log{k}}, \log {k} \right\}\right),
	\]
	which is minimized at $\Theta(\sqrt{\log{n}})$
	for $\lg{k} = \sqrt{\lg{n}}$, or
	$k = 2^{\sqrt{\lg{n}}}$.
\end{proof}

As a side remark, we note that this bound is different than
what one would obtain from considering the $k$ fingers as
$k$ servers on a (static) tree.
Specifically, even if we have up to $k = \Theta(\sqrt{n})$
persistent fingers (which remain where they are after each access),
the online cost of accesses is still $\Omega( \log{n})$.
\begin{lemma}[Online lower bound with persistent fingers]
	\label{lem:k-lower}
	In any binary tree with $k$ persistent fingers, are at least $n/2$ nodes whose
	shortest path to a finger has length at least $\lg{n} - \lg{k}  - \lg{3}$.
\end{lemma}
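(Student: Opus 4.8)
The plan is a simple \emph{packing} argument: bound how many tree nodes can be close to any single finger, then take a union bound over the $k$ fingers.

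\textbf{A ball bound in binary trees.} First I would record the elementary fact that in a binary tree, where every node has at most three neighbors (a parent and up to two children), the number of nodes within distance $d$ of any fixed node $f$ is at most $3\cdot 2^{d}-2$. This is a BFS-layer count around $f$: the layer at distance $0$ contains one node, the layer at distance $1$ contains at most $3$ nodes, and since the tree is acyclic, every node at distance $j\ge 1$ has exactly one neighbor at distance $j-1$ (the neighbor on its path to $f$), hence contributes at most $2$ nodes to the layer at distance $j+1$. Thus that layer has at most $3\cdot 2^{j}$ nodes, and summing the geometric series gives $1+3(2^{d}-1)=3\cdot 2^{d}-2$.

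\textbf{Union bound and the radius threshold.} The statement is vacuous when $\lg n-\lg k-\lg 3\le 0$ (then every node trivially qualifies, since all path lengths are nonnegative), so I may assume $k<n/3$. I would then set the radius to $d:=\lg(n)-\lg(k)-\lg(3)-1=\lg\!\bigl(\tfrac{n}{6k}\bigr)$. By the ball bound, the set of nodes whose shortest path to \emph{some} finger has length at most $d$ is contained in the union of the $k$ radius-$d$ balls at the finger locations, which has size at most $k\,(3\cdot 2^{d}-2)<3k\cdot 2^{d}=3k\cdot\tfrac{n}{6k}=\tfrac n2$. Hence at least $n/2$ nodes lie outside every such ball, and therefore have shortest path of length at least $d+1=\lg(n)-\lg(k)-\lg(3)$ to each finger, which is the claim. (Strictly one should round $d$ down to an integer radius before applying the ball bound; this changes the resulting threshold by at most an additive constant and does not affect the argument.)

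I do not expect a genuine obstacle here: the content is entirely the degree-$3$ ball estimate, with the rest a one-line union bound and an arithmetic choice of radius. The only point deserving care is keeping the constant sharp: a crude estimate such as ``at most $2^{d+1}$ nodes in the subtree below $f$'' undercounts, because it ignores the nodes reachable by first walking \emph{up} from $f$ toward the root and then descending into sibling subtrees, and accounting for those is precisely what produces the factor $3$ (equivalently, the $-\lg 3$) in the statement. An equivalent presentation is to take $d$ as the largest integer with $3k\cdot 2^{d}\le n/2$ and simply read off the resulting lower bound on $d+1$.
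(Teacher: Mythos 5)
Your proposal is correct and takes essentially the same approach as the paper: bound the number of nodes within distance $d$ of a single node by $3\cdot 2^{d}$ in a binary (degree-$\le 3$) tree, take a union bound over the $k$ fingers, and choose $d$ so that the covered set has size at most $n/2$. If anything, your write-up is the more careful of the two -- the paper states the ball bound and the inequality $3k\cdot 2^{d}\ge n$ but leaves the step that actually yields the ``at least $n/2$ nodes'' count implicit, whereas you pick $d=\lg(n/(6k))$ and verify the $n/2$ count explicitly.
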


\begin{proof}
	Because the tournament tree is binary,
	for any distance $d$, each finger can reach at most
	$3\cdot 2^{d}$ vertices.
	So the shortest distance to a finger must satisfy
	\(
		3k \cdot 2^{d} \geq n,
	\)
	or
	\(
	d \wwrel\geq \lg{n} - \lg{3} - \lg{k}
	\).~
\end{proof}
Thus, the much lower bound of $O(\log_{k}(n))$ comes from
the ability to rearrange the tree: this is a key
distinction between finger-based searches on (dynamic) trees
and the study of server problems on a static tree (metric)~\cite{ManasseMS90,BansalBMN15,BubeckCLLM18,Lee2018}.

\section{Bucketed Order by Next Request}
\label{sec:loglog}

By \wref{thm:wilber0} and \wref{lem:larger-k-perm} we can  serve any access sequence 
without repetitions, \ie, any permutation, offline in optimal $\Theta(n \log_k (n))$ time.
For general access sequences with repetitions of keys, such a solution seems not at all obvious, 
but we can achieve almost the same result by an algorithm
which buckets elements on the interval until their next request.

\begin{theorem}[Efficient Offline Algorithm]
	\label{thm:large-k-general}
	Given $k$ transient fingers,
	we can perform any sequence of $m$ operations on a
	tournament tree of size $n$ at cost $O(m (\log_{k}n + 2^{\lg^*(n)}))$.
\end{theorem}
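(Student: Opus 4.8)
The plan is to build an \emph{order-by-next-request} data structure and make it tolerate repeated keys by \emph{bucketing on the distance to the next request}. Since we operate offline, at every time $t$ we know, for each key $x$, the distance $d_t(x)$ to its next request (and $d_t(x)=\infty$ if none remains). If the keys sat on a path sorted by $d_t$, the requested key $a_t$ — which has $d_t(a_t)=0$ — would always be at the head, so each access would be cheap; the obstacle is that serving $a_t$ resets $d(a_t)$ to some large value, and a plain sorted path cannot absorb such re-insertions cheaply. The whole difficulty of the proof is in engineering a layout that does.

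I would maintain two coupled parts. (1) A small \emph{active region} hanging near the root holding (roughly) the next $s$ keys to be requested, for a size parameter $s$; since $a_t$ is always among them, every access descends into the active region, and the active region is itself stored \emph{recursively} as a bucketed tournament tree on its $\le s$ keys. With $s$ shrinking from $n$ through iterated logarithms, this recursion has depth $\lg^* n$, and it is what produces the $2^{\lg^* n}$ term. (2) A \emph{reservoir} holding the remaining keys, partitioned into $O(\lg^* n)$ buckets arranged along a spine by a tower-scale of thresholds on $d_t$ (bucket $i$ holds keys with $d_t\in[\tau_{i-1},\tau_i)$ where $\tau_{i+1}=2^{\tau_i}$), so that a re-inserted key can be routed to its bucket in $O(\lg^* n)$ spine steps and so that a bucket's contents stay valid for $\Theta(\tau_i)$ steps.

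The operation mechanics: to serve $a_t$, recurse into the active region, locate and remove $a_t$, serve it, then re-insert it with its updated distance — back into the active region if that distance is small, otherwise route it down the reservoir spine to the appropriate bucket. When the active region is depleted (after $\Theta(s)$ serves) we \emph{refill} it by pulling the reservoir's nearest bucket $B$, sorting $B$ by next-request distance using the $k$-finger $k$-way merge of \wref{lem:larger-k-perm} at cost $O(|B|\log_k|B|)$, and rebuilding it recursively as the new active region; similarly, when the clock passes a threshold $\tau_i$ we split the next reservoir bucket into its finer sub-thresholds. Amortizing each such sort over the $\Theta(\tau_i)\ge|B|$ operations during which that bucket is relevant shows that the total sorting cost is only $O(\log_k n)$ per operation — this is the sole place the $k$ fingers are used, and it is the origin of the $\log_k n$ term.

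The crux is the amortized analysis. I would use a potential that charges each key according to the index of the bucket and the recursion level it currently occupies, so that routing a key downward is paid for by released potential and a refill is paid for partly by potential and partly by the $O(\log_k n)$ budget. The key structural claim is that a single operation at one recursion level triggers only $O(1)$ operations one level down — essentially one access plus one eviction — on an instance of size $O(\log n)$, which yields a recurrence of the shape
\[
	\phi(n) \wwrel\le 2\,\phi(O(\log n)) + O(\log_k n + \lg^* n)
\]
for the amortized per-operation cost $\phi(n)$; unrolling this over its $\lg^* n$ levels gives $\phi(n)=O(\log_k n + 2^{\lg^* n})$, since the $O(\log_k n)$ term is contributed near the top of the recursion while the doubling factor $2^i$ only becomes significant near the bottom, where the remaining $\log_k(\cdot)$ factor has already collapsed to $O(1)$. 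I expect the real work to lie in two places: pinning down the thresholds and the refill/split schedule so that each bucket is rebuilt only once per threshold crossing even though the next-request distances of many keys may cross thresholds simultaneously; and controlling the branching constant in the recurrence, since anything appreciably larger than $2$ would inflate the additive term beyond $2^{\lg^* n}$.
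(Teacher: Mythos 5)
Your overall plan—bucket by time-to-next-request, amortize the $k$-finger sort of \wref{lem:larger-k-perm} over the lifetime of a bucket, then recurse to shave the $\log\log n$ navigation cost down to $2^{\lg^* n}$—is recognizably the same skeleton the paper uses. But two of your central design choices create gaps that the paper's actual construction is specifically engineered to avoid, and you should be aware of them.

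First, you bucket on the \emph{distance to the next request} $d_t(x)$. This quantity decreases by one for every key on every time step, so bucket membership is in constant flux; the paper explicitly flags this as the obstacle and instead buckets on the \emph{recurrence time} $r(t)$, the gap between consecutive accesses to the same key. Recurrence time is a property of an access, not of the clock, so a key's bucket assignment is frozen from the moment it is (re-)inserted until it is next served. That is what makes the bucket tree maintainable at $O(1)$ amortized structural cost. Your tower-threshold scheme tries to patch the drifting-distance problem with a lazy ``split when the clock crosses $\tau_i$'' step, but this is exactly where the proof has to live, and you leave it unspecified. Concretely: with $\tau_{i+1}=2^{\tau_i}$, reservoir bucket $i$ can hold $\Theta(\tau_i-\tau_{i-1})\approx\tau_i$ keys, yet its smallest-distance keys must be served within $\tau_{i-1}\ll\tau_i$ steps, and the active region you want to refill from it only has room for $s=\tau_0\ll\tau_1$ keys. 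Either you sort the whole bucket far too early (cost $\Theta(\tau_i\log_k\tau_i)$ amortized over only $\tau_{i-1}$ steps, which blows up), or you need a hierarchy of partial sorts—and then you have essentially reinvented geometric buckets. The paper sidesteps all of this with $\lceil\lg n\rceil$ buckets of capacity $2^{j-1}$ and doubling thresholds, so that bucket $B_j$'s size, its sorting cost divided by $\log_k n$, and the length of time for which its sorted queue stays valid are all $\Theta(2^{j-1})$—a perfect three-way match in the amortization.

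Second, you never give a mechanism for absorbing re-insertions into an already-sorted bucket without paying for a sorted insert. The paper's answer is the \emph{near-future / far-future buffer} device: each bucket has a sorted queue plus two unsorted input buffers, insertions go into one of the buffers depending on whether the next access falls before or after the bucket's expiration time $e_j + 2^{j-1}$, and a refresh promotes near $\to$ queue (sorting once), far $\to$ near, and advances $e_j$. This is the invariant (\wref{lem:invariant}) that makes the whole thing well-defined, and your proposal has no analogue of it.

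Finally, your recurrence $\phi(n)\le 2\phi(O(\log n))+O(\log_k n+\lg^* n)$ smuggles an extra $\lg^* n$ additive term whose origin you do not justify, and if the $\log_k(\cdot)$ term were genuinely paid at every level, unrolling would give $\Theta(2^{\lg^*n}\log_k\log n)$ rather than $O(\log_k n + 2^{\lg^* n})$. The paper's recursion is set up differently to dodge this: it contracts the $b=\lceil\lg n\rceil$ buckets to single objects, regards the sequence of bucket touches as a new access sequence of length $\le 2m$, and re-runs the algorithm on that; crucially, it observes that all sorting cost below the top level is dominated by the top-level segment preprocessing, so $\log_k n$ is paid only once. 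I would encourage you to switch to recurrence-time buckets with doubling thresholds and to write down the buffer invariant explicitly; with that in place the rest of your outline goes through.
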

This is just \wref{thm:offline-algo} restated, but emphasizing that transient fingers suffice.
This cost is optimal for $k = O(n^{1/(2^{\lg^*(n)})})$, \ie, for for sub-polynomially many fingers.

\begin{proof}[\wref{thm:large-k-general}]
On a high level, our algorithm 
keeps elements in buckets of exponentially increasing sizes and stores those buckets
in a balanced binary tree. Accessing any bucket is then possible in  $\Oh(\log \log n)$ time.
An obvious candidate for defining buckets is the time of the next access to an element.
While this approach is sufficient for the simple array-based data structure of 
\wref{app:elementary-offline-array-algorithm} and~--
in slightly disguised form~-- also constitutes the mechanism behind 
the offline list-update algorithm from~\cite{Munro2000}, 
it seems hard to (efficiently) maintain buckets based on next-access times 
within a binary tree.

Our solution instead uses the concept of \emph{recurrence time},
the time \emph{between} two successive accesses to the same element,
to define buckets.
To allow the maintenance of buckets in amortized constant time per access,
we have to play a second trick: elements inserted into a bucket
are kept separately in two buffers that distinguish next accesses in the ``near future'' 
from accesses in the ``distant future''. That allows to sort buffers once,
without having to deal with insertions into sorted sequences.

To present the details, we fix some notation.
We call $t$ the (global) \emph{time(stamp)} of the access~$a_t$.
We define the \emph{recurrence time} $r(t)$ of the access $a_t$ at time $t$
to be the number of time steps before that same element is requested the \emph{next} time 
in the future:  
\[
		r(t) 
	\wwrel= 
		\min\{t' - t : a_{t'} = a_t, t' > t \}\cup \{\infty\}.
\]
($r(t) = \infty$ if $a_t$ is the last occurrence of that element in the
access sequence.) 
Similarly, define the \emph{next access time} $n(x,t)$ of an access to key $x$
to be the earliest time $t' > t$ when $x$ is requested:
$n(x,t) = \min\{k : a_{t'} = x, t' > t \}\cup \{\infty\}$.
We abbreviate $n(t) = n(a_t,t) = t + r(t)$.

\paragraph{Segments of accesses}
We assume the access sequence has length $m\ge n$.
We divide the accesses into \emph{``segments''} of length $n$ each,
(allowing a potentially incomplete last segment).
At the beginning of each segment, we rearrange the entire tree $T$, so that 
elements that are not requested during this segment are below any elements that will
be requested in the segment.
(We put unused elements ``out of the way''.)
This preprocessing step (a permutation) costs $\Theta(n\log_k (n))$ per segment, 
which is an amortized $\Theta(\log_k(n))$ contribution to 
the costs for any single access.
We can thus focus on the first segment for the rest of this section.
Moreover, we will understand the recurrence times of accesses to be relative to the segment,
\ie, $r(t) = \infty$ if this is the last occurrence of $a_t$ in the first segment.

\paragraph{Buckets for Recurrence Time}
We will use $b=\lceil \lg n\rceil$ \emph{``buckets''} $B_1,\ldots,B_b$ to hold elements, 
grouped by their current recurrence time:
$B_1$ holds elements with recurrence time $1$,
$B_2$ gets recurrence times $2$ and $3$, and in general,
$B_j$ holds all elements with recurrence times $r$ in $[2^{j-1}{}\,..\,2^{j}-1]$.
Each bucket $B_j$ is a subtree, conceptually divided into three parts: 
two input buffers, called \emph{``near-future buffer''} and 
\emph{``far-future buffer'',} and a \emph{``sorted queue''}.

In a tournament tree, these can be represented by a convention like the one
sketched in \wref{fig:buckets}.
Note that in terms of its interface to other buckets, 
each bucket looks like one big binary node: it has pointers for a left and a right child.
We can therefore form a binary tree of buckets; indeed,
we keep the buckets in a balanced binary tree of height 
$\lceil\lg \lceil\lg n\rceil\rceil \le 1 + \lg \lg n$.
Navigating to (the first node of) a bucket thus costs $\Oh(\log \log n)$.

\usetikzlibrary{decorations,decorations.text}

\begin{figure}[tbh]

	\plaincenter{
	\begin{tikzpicture}[
		xscale=1,yscale=1.1
	] 
	\small
		
		\useasboundingbox (-5.5,-7) rectangle (7.25,1.25);

		\node[inner] (x) at (0,0) {} ;
		\node[inner] (x1) at (1.5, -1) {} ;
		\node[inner] (x2) at (0, -2) {} ;
		\node[inner] (x3) at (-1.5, -3) {} ;
		\node[inner] (x4) at (-3, -4) {} ;
		\node[inner sep=5pt] (x5) at (-4.5, -5) {\dots} ;

		\draw (x) to (x1);
		\draw (x1) to (x2);
		\draw (x2) to (x3);
		\draw (x3) to (x4);
		\draw (x4) to (x5);

		\draw[dotted,black!50,line width=1pt] 
			(-0.5, 0.5) .. controls (-0.2, 0.7) and (0.2, 0.7)
			.. (1.8, -0.5) .. controls (2.5, -1) and (1.5, -2)
			.. (0.1, -2.5) .. controls (-3, -4.5) and (-5, -6)
			.. (-5, -5) .. controls (-4.5, -4) and (.8, -1.25)
			.. (0, -1) .. controls (-0.5, -0.5) and (-0.7, 0.2)
			.. cycle;
		;
		\draw[draw=none, line width=1pt,
			postaction={decorate, decoration={text along path, raise=4pt,
			text align={align=center}, text color = black!50, text={sorted queue}}}]
			(-5, -5) .. controls (-4.5, -4) and (.8, -1.25)
			.. (0, -1)
		;

		\node[inner,fill=green!50!black] (x21) at (1.5, -3) {} ;
		\node[inner,fill=green!50!black] (x22) at (3, -4) {} ;
		\node[inner sep=5pt] (x23) at (4.4, -4.9) {\dots} ;
		\draw (x2) to (x21);
		\draw (x21) to (x22);
		\draw (x22) to (x23);

		\draw[rounded corners=15pt,dotted,line width=1pt,color=green!50!black] 
			(1.45,-2.25) -- node[above,sloped,rotate=-2.5] {near-future buffer} 
			(5,-4.7) -- (4.5,-5.5) -- (.7,-3.1) -- cycle
		;

		\node[inner,fill=red] (x31) at (0, -4) {} ;
		\node[inner,fill=red] (x32) at (1.5, -5) {} ;
		\node[inner sep=5pt] (x33) at (2.8, -5.9) {\dots} ;
		\draw (x3) to (x31);
		\draw (x31) to (x32);
		\draw (x32) to (x33);

		\begin{scope}[shift={(-1.5,-1.05)}]
		\draw[rounded corners=15pt,dotted,line width=1pt,color=red!90!black] 
			(1.45,-2.25) -- node[above,sloped,rotate=-2.5] {far-future buffer} 
			(5,-4.7) -- (4.5,-5.5) -- (.7,-3.1) -- cycle
		;
		\end{scope}
		
		\draw[dotted,line width=2pt,color=blue!80!black] 
			(-1, 0) .. controls (-.5, 1.5) and (1.5, 1.2)
		.. (2, 0) .. controls (4, -4) and (5., -3.8)
		.. (5, -4) .. controls (6.5, -5) and (5, -8)
		.. (2, -6.5) .. controls (-2, -4) and (-3, -5.3)
		.. (-4, -6) .. controls (-5, -6.5) and (-6, -5.5)
		.. (-5, -4.5) .. controls (-2, -2) and (-1, 0)
		.. cycle
		; 

		\node at (2, 1) {$B_j$};
				
		\node[circle,dotted,color=blue!80!black,draw,minimum size=50pt, line width=2pt] (l) at (-4.5, -2) {};
		\draw (x) -- (l) node [midway, above, sloped,pos=.7] (xltext) {left} ;
		\node[circle,dotted,color=blue!80!black,draw,minimum size=50pt, line width=2pt] (r) at (6, -3) {};
		\draw (x1) -- (r) node [midway, above, sloped,pos=.65] (rltext) {right};
		
	\end{tikzpicture}}

\caption{%
	Sketch of representation of our (conceptual) buckets in the tournament tree,
	showing the queue and the two buffers.
	To the outside, the buckets look like a binary-tree node and can hence be arranged as a binary tree themselves.
	The shaded nodes each consist of one or two internal nodes and a leaf with the stored key.%
}
\label{fig:buckets}
\end{figure}
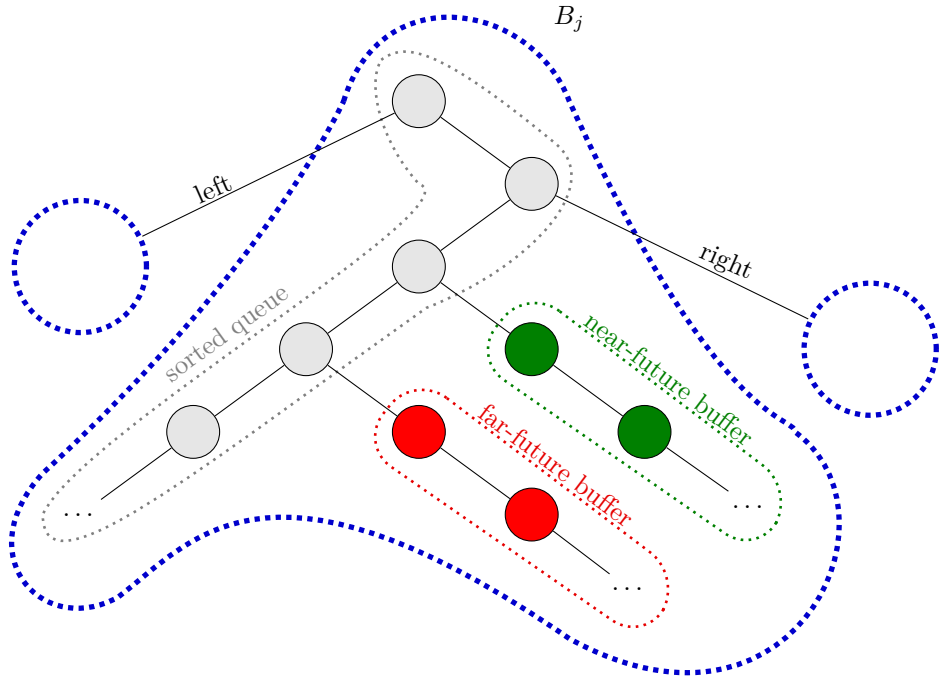

We think of the buffers and the sorted queue of bucket $B_j$ as having a maximal capacity of 
$2^{j-1}$ elements each.
The input buffers are linear lists into which a new node $v$ is inserted 
by making the current buffer
$v$'s child and using $v$ as the new root of the buffer.
The queue is similar, but here elements are only consumed by removing the root of the queue.

Moreover, for each bucket $B_j$, we store an \emph{``expiration time''} 
$e_j$ (for its sorted queue);
this is the time when the sorted queue will resp.\ would become ``invalid''. 
The significance of expiration times will become more clear when we describe 
insertions into buffers below.
The sorted queue can run empty earlier than its expiration time, but we will prove 
(\wref{lem:invariant}) that it always does
so the latest at time $e_j$: 
we do not consume elements past their best-before date.
Initially, the sorted queue is empty and we set $e_j = 2^{j-1} - 1$.

\paragraph{Sorting elements into buckets}
Suppose we serve the access at time $t$ to key $x=a_t$.
$x$ is currently stored in some bucket $B_j$ and will be at the front of $B_j$'s 
sorted queue.
We remove $x$ from the sorted queue of $B_j$
and insert it into a new bucket $B_\ell$, depending on its recurrence time $r(t)$, 
namely so that $r(t) \in [2^{\ell-1}{}\,..\,2^{\ell}-1]$;
(\ie, $\ell$ is the smallest power of two greater than~$r(t)$).

Now, the buffer inside $B_\ell$ into which $x$ is inserted is selected based
on the \emph{absolute time} of the next access to $x$, $t' = n(x,t) = t + r(t)$,
and the expiration time $e_\ell$ of bucket $B_\ell$:
If $t' > e_\ell + 2^{\ell-1}$, then we insert $x$ into the far-future buffer, 
otherwise into the near-future buffer.
The overall procedure is given in \wref{alg:loglog}.

\begin{algorithm}
	\plaincenter{\fbox{~\parbox{.95\linewidth}{%
	\sffamily\small\raggedright\strut
	Repeat for all segments of the access sequence:
	\begin{enumerate}
		\item 
			Sort all elements by first access for this segment 
			and store as sorted queue of $B_0$.
		\item 
			Initialize empty buckets $B_1,\ldots,B_b$. 
		\item 
			For each access $a_t$ in the segment:
		\begin{enumerate}[topsep=0ex]
			\item Let $\rho_1$ be the recurrence time after which $a_t$ occurred,
				\ie, the unique number with $r(a_{t-\rho_1}) = \rho_1$,
				or $\rho_1 \gets \infty$ if element $a_t$ was never accessed before.
			\item	
				\(j \gets \begin{cases*}\lfloor \lg \rho_1\rfloor & if $\rho_1 < \infty$ \\
					0 & otherwise\end{cases*}\)\quad and \\
				\(\ell \gets \begin{cases*}\bigl\lfloor \lg \bigl( r(a_t)\bigr) \bigr\rfloor & if $r(a_t) < \infty$ \\
						0 & otherwise\end{cases*}\).
			\item
				If $t > e_\ell$, call Refresh($B_\ell$).\\
				If $t > e_j$, call Refresh($B_j$).
			\item
				Access the first element, $a_t$, in the sorted queue of bucket $B_j$.
			\item 
				If $n(a_t) \le e_\ell+2^{\ell-1}$, 
				insert $a_t$ into the near-future buffer of $B_\ell$;\\
				otherwise, insert $a_t$ into the far-future buffer of $B_\ell$.
		\end{enumerate}
	\end{enumerate}

	\noindent
	The procedure Refresh($B_j$) refills the sorted queue: 
	\begin{enumerate}
		\item While $t > e_j$ repeat:
		\begin{enumerate}[topsep=0ex]
			\item 
				Make the former near-future buffer of $B_j$ the new queue.\\
				(The old sorted queue is guaranteed to be empty at this stage.)
			\item 
				Make the former far-future buffer of $B_j$ the new near-future buffer.
			\item
				Initialize the far-future buffer of $B_j$ as empty.
			\item
				Set $e_j \gets e_j+2^{j-1}$.
		\end{enumerate}
		\item
			Sort all elements (if any) in the queue of $B_j$ (forming the new \emph{sorted} queue).
	\end{enumerate}

	}}}
	
	\caption{%
		Our doubly-logarithmic offline algorithm for unordered binary trees.
	}
	\label{alg:loglog}
\end{algorithm}

\paragraph{Refreshing Buckets}
When a bucket $B_j$ is about to expire, it is time to ``refresh'' it.
We will opt for a lazy refreshing scheme that allows buckets to remain in an 
expired state, as long as they are not ``touched''.
Here, by touching a bucket, we mean visiting it to access (and extract) an element
or to insert an element into it.
Upon touching a bucket, we check if it has expired, and if so, we refresh it
before continuing.

Refreshing is comprised of the following steps:
We use the current near-future buffer as the new queue, 
and sort its elements by their next access time.
We also make the current far-future buffer the new near-future buffer,
and create a new, empty far-future buffer.
Finally, we advance bucket $B_j$'s expiration time $e_j$ by
$2^{j-1}$, the capacity of $B_j$.

The above steps describe the usual refreshing procedure, but we have
to slightly extend in in general:
A bucket might not have been touched for an arbitrarily long time frame
when the buffers remained empty.
Then, we may have to ``fast-forward'' several $2^{j-1}$ steps before catching 
up with the current time.
Note that in this case, there cannot be any elements in the intermediate queue(s),
so only the last step actually sorts a nonempty queue.
This is reflected in the code in \wref{alg:loglog}.

We initially keep elements in a global sorted queue, 
sorted by first access (for this segment);
we formally call this the zeroth bucket $B_0$.
We build the queue of $B_0$ in the preprocessing step at the beginning of a segment.
We thereby maintain the following invariant:
\begin{lemma}[Bucket invariant]
\label{lem:invariant}
	At any time $t\in[m]$, the bucket $B_j$, $j=1,\ldots,b$, contains exactly the elements 
	whose next access (after $t$) happens after a recurrence time in $[2^{j-1},2^j)$.
	Among those, the sorted queue contains elements with next access at an 
	(absolute) time in $(e_j-2^{j-1}, e_j]$, 
	sorted by next access time,
	the near-future buffer contains elements with next access at a time in $(e_j, e_j+2^{j-1}]$, and
	the far-future buffer those with time in $(e_j+2^{j-1}, e_j+2^j]$.
\end{lemma}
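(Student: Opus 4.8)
The plan is to prove the claim by induction on the time step $t$ within the current segment, where the induction hypothesis asserts both the stated invariant at all times $<t$ and --- as a consequence of it at each earlier step --- that every earlier access $a_s$ ($s<t$) was served by removing $a_s$ from the \emph{front} of the sorted queue of the bucket it then resided in; equivalently, each element is consumed exactly at the absolute time recorded as its next access. These two facts are genuinely interlocked and must travel together through the induction. Throughout, ``the recurrence time of $x\in B_j$'' is read as the value $r(s)$ recorded at $x$'s most recent occurrence $s\le t$. For the base case (the state right after the preprocessing of a segment), all $n$ elements lie in the sorted queue of $B_0$ ordered by first access, every $B_j$ with $j\ge 1$ is empty, and $e_j=2^{j-1}-1$; the invariant then holds vacuously for $B_1,\dots,B_b$, and there are no earlier accesses.

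For the inductive step, assume the hypothesis just before access $a_t$ and trace the substeps (a)--(e) of \wref{alg:loglog}. First I would establish the \emph{location claim}: $a_t$ currently sits in the bucket $B_j$ with $j$ exactly the index the algorithm computes from $\rho_1$ (and $j=0$ if $a_t$ has no earlier occurrence in the segment). This follows by applying the insertion step at $a_t$'s previous occurrence at time $t-\rho_1$, whose recurrence time was $\rho_1$, and noting that the next-access value recorded there was $(t-\rho_1)+\rho_1=t$. Next, the \emph{front claim}: after the (possibly fast-forwarding) call to Refresh$(B_j)$, which is triggered precisely when $t>e_j$, the element $a_t$ is at the head of $B_j$'s sorted queue. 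Indeed, Refresh advances $e_j$ in steps of $2^{j-1}$ until $e_j\ge t$ and never overshoots $t$ by more than one step, so afterwards $e_j-2^{j-1}<t\le e_j$, i.e.\ $t$ lies in the queue window $(e_j-2^{j-1},e_j]$; hence by the pre-refresh invariant applied to the near-future buffer that is promoted to be the new queue, $a_t$ (with recorded next-access $t$) now sits in the sorted queue. By the second part of the hypothesis every element still in any bucket at time $t$ has recorded next-access $\ge t$, $a_t$ is the unique element with next-access $=t$, and the queue is sorted by next-access time, so $a_t$ is at the front and substep (d) is legitimate. The same observation shows every queue rotated out during a fast-forward was empty --- its elements had next-access $\le e_j^{\mathrm{old}}<t$ and were consumed earlier --- which justifies the parenthetical assumption in the code. (For $j=0$ there is no Refresh: $B_0$ is sorted by first access once, and the argument that all its surviving elements have first-access $\ge t$ is identical.)

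It remains to re-establish the invariant after $a_t$ is removed and re-inserted. Call Refresh$(B_\ell)$ if $t>e_\ell$; afterwards $e_\ell-2^{\ell-1}<t\le e_\ell$ by the same reasoning. The index $\ell$ is chosen so that the new recurrence time $r(t)$ lies in the recurrence window $[2^{\ell-1},2^\ell)$ owned by $B_\ell$, which restores the ``contains exactly'' clause for $B_\ell$; and a short check of elementary inequalities ($e_\ell<n(a_t)\le e_\ell+2^\ell$, using $t>e_\ell-2^{\ell-1}$, $t\le e_\ell$, and the size of $r(t)$) shows $n(a_t)=t+r(t)$ falls in the near-future window $(e_\ell,e_\ell+2^{\ell-1}]$ or the far-future window $(e_\ell+2^{\ell-1},e_\ell+2^\ell]$ exactly according to which branch of substep (e) is taken, so inserting $a_t$ at the head of that buffer is consistent with the buffer's window. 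No bucket other than $B_j$ and $B_\ell$ is touched at this step, so their three parts and windows are unchanged; the only element whose bucket membership changes is $a_t$; hence the invariant holds at time $t+1$, closing the induction.

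The step I expect to be the main obstacle is the bookkeeping inside Refresh: proving that the buffer promoted to become the new sorted queue is \emph{precisely} the set of elements whose next-access lies in the new queue window --- so that a single sort suffices and no element is ever consumed ``late'' or ``early'' --- and that no buffer ever exceeds its conceptual capacity $2^{j-1}$. Both facts rest on the consume-exactly-on-time property, which is why it has to be carried inside the induction hypothesis rather than proved afterwards; and the multi-period fast-forward case needs the separate (but easy, given the invariant) remark that every intermediate buffer it skips over is empty.
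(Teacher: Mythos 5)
Your proof is correct and follows essentially the same inductive scheme as the paper's: induction over time steps, checking that a fast-forwarding refresh realigns the expiration window so that $e_j - 2^{j-1} < t \le e_j$, and verifying by short inequalities that the insertion branches of the algorithm place $a_t$ into the buffer matching the invariant window. The one organizational difference is that you promote the ``consume-exactly-on-time'' fact to an explicit co-inductive clause of the hypothesis, whereas the paper leaves it implicit and extracts it when it needs to argue that the intermediate queues created during a multi-round fast-forward are empty (``$B_j$ cannot have been touched after its original expiration time $e_j$ \dots hence $B_j$ never contains an element with access time $t' \in (e_j,t)$''). Both derivations are sound and rely on the same underlying observation, namely that the invariant at earlier times forces any element in a bucket to have a recorded next-access time $\ge t$; making it a named co-inductive claim as you do is arguably cleaner, since the paper's one-sentence justification compresses the step where the fact ``no insertion or access touched $B_j$ since $e_j$'' is fed back through the earlier invariants. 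Two small remarks: the ``no buffer exceeds capacity $2^{j-1}$'' observation you flag as a possible obstacle is actually an immediate corollary of the window statement (each window spans $2^{j-1}$ time slots and each slot has at most one access), so it need not be proved separately; and your phrasing ``invariant holds at time $t+1$'' differs by an off-by-one convention from the paper's ``for time $t$,'' which is purely notational.
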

\begin{proof}
The proof is by induction over time $t$.
Initially, buckets are empty and there is no recurrence time before the first access,
so the claim holds.
Let us now assume the claim holds up to time $t-1$. 
At time $t$ there will be a new access, $a_t$ to be served.
If $a_t$ occurs after a recurrence time in $[2^{j-1},2^j)$, we find it in $B_j$'s sorted queue.
(It is vital here that recurrence times do \emph{not} change when we advance $t$, 
whereas the time \emph{until} the next access certainly does.)
Moreover, unless $a_t$ is no longer accessed in this segment, 
it has a new recurrence time $r(t) \in [2^{\ell-1},2^\ell)$, for some $\ell$.
It then has to be (re-)inserted into~$B_\ell$.

Either of these two buckets might have expired, in which case we refresh it.
Assume $B_j$ has expired, \ie, $t>e_j$.
Refresh($B_j$) executes $p \ge 1$ ``promotion rounds'', \ie, $p$ iterations of the white loop,
where $p$ is determined by requiring $t \le e_j+p \cdot 2^{j-1}<t+2^{j-1}$.

\begin{figure}[thbp]
	\plaincenter{
	\begin{tikzpicture}[scale=.9,every node/.style={inner sep=1pt}]
	\small
	\def\r{2pt}
		\fill[black!10] (4.1,-4.5) rectangle ++(2.6,5.8) ;
		\begin{scope}
			\draw[->] (1,0) -- ++(12,0) node[below left] {time (next access)} 
				node[xshift=5em,overlay,anchor=west] {before refresh};
			\draw[ultra thick,dotted, black!50] (6.8,-4.5) -- ++(0,5.8) node[black,above=-2pt] {$t$\rlap{ (now)}};
			\foreach \x/\l in {2/{e_j-2^{j-1}},4/e_j,6/e_j+2^{j-1},8/e_j+2^j} {
				\draw[dotted] (\x,0) -- ++(0,-1) node[] {$\l$};
			}
			\draw[very thick,{Parenthesis[]-Bracket[]},shorten <=1pt] (2,0) -- node[above] {\strut queue} (4,0) ; 
			\draw[very thick,{Parenthesis[]-Bracket[]},shorten <=1pt] (4,0) -- node[above] {\strut near fut.} (6,0) ; 
			\draw[very thick,{Parenthesis[]-Bracket[]},shorten <=1pt] (6,0) -- node[above] {\strut far fut.} (8,0) ; 
			\foreach \x in {6.8,7.5,7.8} {
				\fill (\x,0) circle (\r) ;
			}
			\foreach \x in {8.5,8.8,9.2,10.2,2.5,2.9,3.6} {
				\fill (\x,0) circle (\r) ;
			}
			\draw [<->,shorten >=1pt,shorten <=1pt]
				(6.7,.9) -- node[fill=black!10] {\smaller\itshape no touches!} ++(-2.6,0);
		\end{scope}
		\begin{scope}[shift={(2,-3)}]
			\draw[->] (-1,0) -- ++(12,0)  
				node[xshift=5em,overlay,anchor=west] {after refresh};
			\foreach \x/\l in {4/{e'_j-2^{j-1}},6/e'_j,8/e'_j+2^{j-1},10/e'_j+2^j} {
				\draw[dotted] (\x,1.75) -- ++(0,-1-1.75) node[] {$\l$};
			}
			\draw[very thick,{Parenthesis[]-Bracket[]},shorten <=1pt] (4,0) -- node[above] {\strut queue} (6,0) ; 
			\draw[very thick,{Parenthesis[]-Bracket[]},shorten <=1pt] (6,0) -- node[above] {\strut near fut.} (8,0) ; 
			\draw[very thick,{Parenthesis[]-Bracket[]},shorten <=1pt] (8,0) -- node[above] {\strut far fut.} (10,0) ; 
		\end{scope}
			\foreach \x in {6.8,7.5,7.8} {
				\fill (\x,-3) circle (\r) ;
			}
			\foreach \x in {8.5,8.8,9.2,10.2,2.5,2.9,3.6} {
				\fill (\x,-3) circle (\r) ;
			}
	\end{tikzpicture}
	}	
	\caption{%
		Illustration of a refresh operation with $p=2$ steps.
		The picture shows the ranges of valid next-access times for elements
		in the sorted queue and buffers of $B_j$ before and after the refresh;
		dots indicate times at which $B_j$ is touched.
		Note that $B_j$ cannot have been touched during the gray period for 
		it would have been refreshed earlier then.
	}
	\label{fig:refresh}
\end{figure}
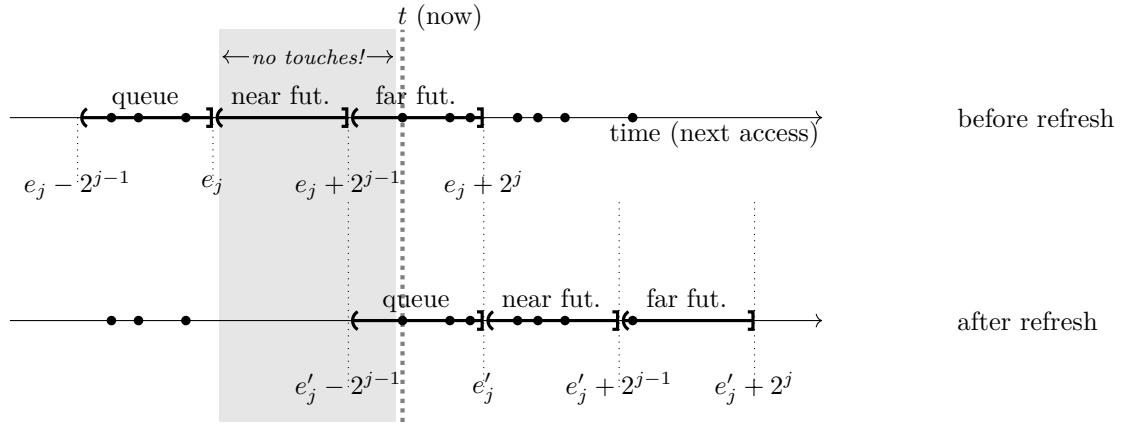

The code in \wref{alg:loglog} only sorts the queue after the last promotion round.
This is sufficient since all temporary queues created in earlier promotion rounds
must be \emph{empty:}
$B_j$ cannot have been touched after its original expiration time $e_j$ since it would have been refreshed then,
so in particular, it is not \emph{accessed} between time $e_j$ and $t$, 
and hence $B_j$ never contains an element with access time $t' \in (e_j,t)$.
Since temporary queues represent time periods before $t$, they are all empty.

Moreover, in each promotion round, the update of $e_j$ exactly undoes the effect of turning 
the near-future buffer into the new queue and the far-future buffer into the new near-future buffer:
the queue always contains elements with next access time $(e_j-2^{j-1}, e_j]$,
the near-future buffer contains elements with next access time in $(e_j, e_j+2^{j-1}]$, and
the far-future buffer those in $(e_j+2^{j-1}, e_j+2^j]$, maintaining the invariant (see \wref{fig:refresh}).

To serve the access $a_t$, we remove it from the sorted queue of $B_j$.
Since that element's next access is no longer at time $t$ (after $a_t$ has been served), 
this reestablishes the invariant.
Moreover, we note that the sorted queue is explicitly sorted upon a refresh and removing the first
element is the only type of update it ever sees, so it remains in sorted order.

Finally, the element $a_t$ is (re-)inserted into $B_\ell$.
If $n(t) > e_\ell + 2^{\ell-1}$, the element is inserted into the far-future buffer.
Since we just refreshed the touched buckets if they were expired, we have $t \le e_\ell$.
With $r(t) < 2^\ell$, this implies $n(t) = t + r(t) < e_\ell + 2^\ell$, 
so $a_t$ fulfills the conditions of the far-future buffer.
If $n(t) \le e_\ell + 2^{\ell-1}$, we insert $a_t$ into the near-future buffer.
It is vital to show that $n(t) > e_\ell$, otherwise it would belong into the sorted queue.
But since $B_\ell$ expires after $2^{\ell-1}$ time steps, we have $t > e_\ell - 2^{\ell-1}$.
With $r(t) \ge 2^{\ell-1}$, the claimed inequality follows: 
$n(t) = t + r(t) > e_\ell - 2^{\ell-1} + 2^{\ell-1} = e_\ell$.
So in all cases, we have reestablished the invariant for time $t$.
\end{proof}
From \wref{lem:invariant}, it immediately follows that our procedure is well-defined:
The element $a_t$ to be accessed will always be at the top of the sorted queue,
readily waiting to be picked up.

\paragraph{Cost analysis}
We divide costs into intra-bucket maintenance and the access costs to reach the buckets in the first place.
To serve one access in a segment, we touch two buckets: 
$B_j$ for retrieving the element from the sorted queue, and $B_\ell$ 
for inserting the element; (possibly $B_j=B_\ell$). 
Inside $B_j$, the requested element is found at the root of the sorted queue,
so we only pay constant extra cost inside $B_j$.
In order to insert the new element into $B_\ell$, 
we need to locally modify a constant number of nodes inside $B_\ell$.
Thus, both operations cause constant additional cost inside the buckets.

If we need to ``refresh'' a bucket, we pay constant overhead to promote the
buffers.
Sorting the near-future buffer costs at most $\Oh(\log_k(n))$ per element
(\wref{lem:larger-k-perm}).
We charge these costs to the next access of each element.
(Every element is sorted only once before it is accessed again, so we 
charge each access at most once for sorting.)

Navigating to one bucket requires navigating down a path of $\Oh(\log \log n)$ nodes each.
We access 2 buckets per access.
The amortized cost of serving one access then consists of $\Oh(1)$ ``intra-bucket maintenance'' costs,
$\Oh(\log \log n)$ to navigate to 2 buckets, and $\Oh(\log_k(n))$ sorting costs.

\paragraph{Hyper Buckets}

Serving an access consists of two conceptually independent steps:
retrieving buckets (source and target) and 
modifying those buckets appropriately.
The intra-bucket part already has optimal amortized cost,
but finding the buckets incurred a $\Oh(\log\log n)$ penalty for navigating
in a binary tree of $b = \lceil \lg n\rceil$ objects,
which is significant for large $k$.
We can improve this by observing that the \textsl{accesses to the buckets are 
themselves an instance of our original problem!}

We (conceptually) contract the buckets into single nodes (cf.\ \wref{fig:buckets})
and assign them ids from $[b]$. Then, executing our doubly-logarithmic algorithm
to serve an access sequence $a_1,\ldots,a_m$ generates a \emph{bucket access sequence}
$a'_1,\ldots,a'_{m'} \in \{B_1,\ldots,B_b\}$ of length $m' \le 2m$, 
but over a universe of only $n' = b=\lceil \lg n\rceil$ different objects.
We recursively apply our offline algorithm on $a'_1,\ldots,a'_{m'}$,
breaking it into segments of $n'$ accesses each, and placing objects into
one of $b' = \lceil \lg b \rceil \le \lg \lg n +1$ buckets.

Iterating this $d\ge 1$ times results in a ``hyper-bucket'' access sequence 
of length $m^{(d)} \le 2^d m$ over $n^{(d)} \le \lg^{(d)} (n) + 1$ different objects,
where $\lg^{(d)}$ denotes the $d$-times iterated logarithm.
Serving this last sequence by keeping the hyper buckets in a static, balanced
tree yields total cost $\Oh( 2^d m \cdot \lg^{(d+1)} (n) )$.
Moreover, we accumulate constant amortized cost over the $d$ levels of recursion
(for maintaining buckets there), giving a total cost of 
$\Oh( \log_k(n) + dm + 2^d m \lg^{(d+1)} (n) )$.
To roughly balance the two factors involving $d$ 
in the third summand, we set $d = \lg^* n$,
yielding total costs in $\Oh(\log_k(n) + m \cdot 2^{\lg^* (n)})$.
Note that sorting costs beyond the topmost level of recursion are
entirely dominated by the sorting costs for topmost buckets and for 
the preprocessing at the beginning of a segment.
This completes the proof of \wref{thm:large-k-general}.
\end{proof}

\section{Conclusion}
\label{sec:conclusion}

In this paper, we investigate models of self-adjusting heaps, specifically,
we study tournament trees.
In the spirit of earlier work on the list-update problem~\cite{MartinezRoura2000,Munro2000},
we point out the importance of the allowed primitive operations for rearranging
the data structure.
We claim that our model of $k$ (transient) fingers, where $k$ is a parameter, 
is a natural choice that allows us to study the continuum between purely local
rearrangement operations and overly powerful global rearrangement.
The influence of different rearrangement primitives is a new facet for
the study of dynamic optimality that is not present for binary search trees,
but enters the game for any type of general heap data structure (not only for tournament trees).
Does the additional freedom in rearrangements give offline algorithms an insurmountable advantage?
Or is there a way to make better use of it also in an online setting (in another model of heaps)?

We show that tournament-tree-based heaps that use 
a decrease-key operation cannot be dynamically optimal, 
totally irrespective of the number of fingers we choose to allow.
Our result invites to try two modifications for getting around the strong separation.
First, one might be willing to sacrifice the ability to modify keys, and
allow only insert and extract-min operations.
\ifarxiv{%
	However, this seems unlikely to defy arguments along the lines given above if suitably adapted.%
}{%
	However, similar arguments as above show that dynamic optimality is out of reach even
	for only extract-min operations.%
}

A second, more promising route is to abandon tournament trees altogether.
We believe that investigating other models of heaps, especially ones
with a less stringent requirement for rearranging paths,
is a highly interesting question;
heap-ordered binary trees and unbounded degree forests immediately come to mind.
We leave their study for future work.

A natural open problem posed by our algorithm is whether the optimal $\Oh(\log_k(n))$
amortized access cost is attainable in the
tournament-tree model with $k$ fingers for any input, 
or if there is an intrinsic cost of insisting on a (binary)-tree-based data structure
(as opposed to random-access memory).

More generally, we also believe that separation between online
and offline algorithms is a far more widespread phenomenon.
The current best running times for offline and online algorithms differ
in a multitude of problems related to dynamic graph data
structures~\cite{Eppstein94,PengSS17:arxiv,HolmDT01,KelnerOSZ13,CohenKMPPRX14}.
Formulating and investigating this separation is an intriguing task
that is significantly beyond the scope of this paper.

\bibliography{references}

\appendix

\section{Offline Algorithms Are Boring With Random Access}
\label{app:elementary-offline-array-algorithm}

As a side comment, we will consider the power of offline algorithms when
they are not restricted by the way in which they store the objects.
We show that we can serve any access sequence offline
in linear time (and thus, optimally) using an array.

\begin{lemma}
	\label{lem:very-very-lazy}
	Any access sequence on $n$ elements can be served by a
	data structure with $n$ persistent pointers in $O(1)$ time per access.
\end{lemma}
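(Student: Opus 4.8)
The plan is to exploit the full freedom of the pointer-machine model: since we are handed $n$ \emph{persistent} pointers and are free to store the elements however we like, we simply keep one dedicated pointer permanently parked on each element. First I would fix a static layout, storing the $n$ elements in $n$ fixed memory cells (conceptually an array $R[1 \ldots n]$ with $R[i]$ holding key $i$), and never move any of them for the entire access sequence. Then I would initialize the persistent pointers $P_1,\ldots,P_n$ so that $P_i$ points at the cell storing element~$i$; because nothing ever moves, these pointers are never touched again after setup.

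Serving an access is then immediate: to handle the request $a_t = x$, the pointer $P_x$ is already sitting on $x$, so we just perform the ``serve'' operation from $P_x$. This is a constant number of pointer manipulations, so the cost is $O(1)$ per access, hence $O(m)$ over an access sequence of length $m$~-- trivially optimal since any algorithm must spend $\Omega(1)$ per access. If one insists on charging for initialization, setting up the $n$ pointers costs $O(n)$, which under the standing assumption $m \ge n$ amortizes to $O(1)$ per access (and can otherwise be folded into a free preprocessing phase).

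The point of the lemma is precisely that there is \emph{no} real obstacle: the only observation needed is that persistent pointers can be pre-positioned once and then left untouched, so the ``access path'' degenerates to a single hop. This is exactly what makes offline~-- indeed even online~-- algorithms uninteresting in a model with $n$ persistent fingers, and it motivates the paper's restriction to \emph{transient} fingers and to tournament trees, where the root-to-leaf navigation cannot be short-circuited in this fashion.
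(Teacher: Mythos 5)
Your proof is correct, and it is genuinely simpler than the paper's. You park a dedicated persistent pointer $P_x$ on each element $x$ once and for all; the paper instead maintains an array $R[1..n]$ of persistent pointers indexed by \emph{next-access time within the current length-$n$ segment}: at segment-time $t$ the element to serve sits in $R[t]$, and after serving it is re-inserted at $R[j - kn]$ where $j$ is its next access time (or dropped if that falls outside the segment), with an $O(n)$ rebuild at each segment boundary amortized over the segment. Both schemes use $n$ persistent pointers and give $O(1)$ amortized cost. What the paper's more elaborate version buys is pedagogical: it is a bare-bones instance of the ``index/bucket by next-access time'' idea that reappears, in binary-tree form, in the offline algorithm of Section~4 (and is explicitly flagged there as ``sufficient for the simple array-based data structure'' of the appendix). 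Your version makes the lemma's point~-- that many persistent fingers trivialize the problem~-- more starkly, but does not foreshadow the main construction.

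One small caveat on your parenthetical ``indeed even online'': to serve $a_t = x$ you must \emph{select} the finger $P_x$ given the key $x$. Offline this is free~-- the whole sequence of finger choices can be precomputed and hard-wired, which is exactly the license the paper's proof also uses when it jumps to $R[j-kn]$. Online, however, selecting $P_x$ from $x$ in $O(1)$ is a random-access/indexing step that a strict pointer machine does not provide; you would need some navigation structure over the $n$ fingers, reintroducing a nontrivial cost. Since the lemma lives in an appendix titled ``Offline Algorithms Are Boring With Random Access,'' the offline reading is the intended one and your proof establishes it; just don't lean on the online strengthening without making the indexing assumption explicit.
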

\begin{proof}
We divide $A$ into \emph{segments} $A_0,A_1,A_2\ldots$ of $n$ accesses each, \ie,
$A_k = a_{k\cdot n + 1},\ldots,a_{k\cdot n+ n}$ for all $k$ (except possibly an incomplete last round).
At the beginning of each segment,
we create an empty array $R[1..n]$ holding (pointers to) the stored objects.
Iterating over all objects $x$,
we insert $x$ into $R[n(x,k \cdot n) - k\cdot n]$ if $n(x,k\cdot n) - k\cdot n \le n$,
otherwise $x$ remains inactive in this round.
We are now ready to start serving accesses.
The $t\textsuperscript{th}$ request of this segment, $a_{k\cdot n + t} = x$ is found in $R[t]$, 
so we can return $R[t]$ to serve this access.
Moreover, with $j = n(k\cdot n + t)$, the next access time to $x$,
we update the array: if $j - k\cdot n \le n$, we insert a pointer to $x$ into $R[j-k\cdot n]$, 
otherwise $x$ becomes inactive for the rest of this round.
We now continue in the same way with the remaining accesses of $A_k$.
Since we reinsert all elements that are accessed in this round again,
any access can be served by the reference from $R$, which takes constant time each.
The preprocessing at the beginning of a round takes $\Theta(n)$ time, and can thus
be amortized over the next $n$ accesses of the round.
We can thus serve any sequence of accesses in optimal constant amortized time.
\end{proof}

\section{From Keys in Internal Nodes to Keys in Leaves -- And Back}
\label{app:leaf-oriented-BSTs}

In this appendix, we show that from the perspective of dynamic optimality,
standard BSTs and leaf-oriented BSTs are essentially equivalent.

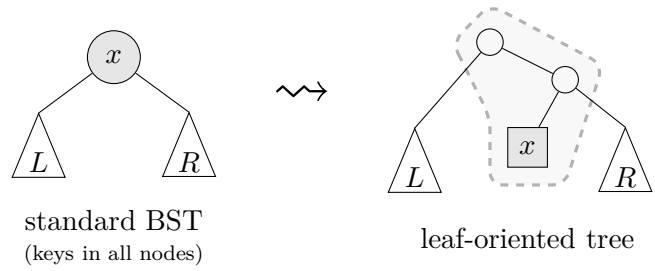
\begin{figure}[tbhp]
	\plaincenter{
	\begin{tikzpicture}
		\node[inner] (x) at (0,0) {$x$} ;
		\node[subtree] (L) at (-1,-1.4) {$L$} ;
		\node[subtree] (R) at (1,-1.4) {$R$} ;
		\draw (x) to ($(L.north)-(0,.35pt)$);
		\draw (x) to ($(R.north)-(0,.35pt)$);
		\node[align=center] at (0,-2.4) {standard BST\\[-.5ex]\textsmaller[2]{(keys in all nodes)}} ;
		
		\node[scale=2] at (2.5,-.5){$\rightsquigarrow$};
		
		\begin{scope}[shift={(5,.2)}]
			\draw[black!30,very thick,dashed,rounded corners=5pt,fill=black!3] 
				(-.5,0) -- (0,-1) -- (0,-1.9) -- ++(1,0) -- ++(.5,1.6) -- ++(-1.5,.8) -- cycle;
			\node[einner] (i1) at (0,0) {};
			\node[einner] (i2) at (1,-.5) {};
			\node[leaf] (lx) at (.5,-1.4) {$x$} ;
			\node[subtree] (lR) at (1.8,-1.8) {$R$} ;
			\node[subtree] (lL) at (-1,-1.8) {$L$} ;
			\draw 
				(i1) to (i2) to (lx)
				(i1) to ($(lL.north)-(0,.35pt)$)
				(i2) to ($(lR.north)-(0,.35pt)$)
			;
		\end{scope}
		\node at (5.5,-2.4) {leaf-oriented tree} ;
	\end{tikzpicture}
	}
	\caption{%
		Transformation from standard BSTs to leaf-oriented trees (for a node with nonempty subtrees).
		When $L$ and/or $R$ are empty, special rules apply: If $x$ is a leaf (both $L$ and $R$ are empty), 
		it is mapped to a leaf with key $x$.
		If $x$ is a unary node, it is mapped to a single internal node with $x$ and its nonempty subtree
		attached (in correct order).
	}
	\label{fig:standard-to-leaf-oriented}
\end{figure}

First of all, we can associate a leaf-oriented tree $T' = \ell(T)$ to any standard BST $T$
by applying the replacement rule shown in \wref{fig:standard-to-leaf-oriented} individually
to all nodes in $T$.
Since the transformation is entirely local (it keeps the structure of $T$ intact within $T'$), 
and replaces each node in $T$ by at most 3 nodes in $T'$,
we immediately obtain the following lemma.
\begin{lemma}
\label{lem:top-subtrees-bst-to-leaf-oriented}
	Let $V_j$ be an arbitrary top-subtree of $T$. Then there is a top-subtree $V_j'$ of $T' = \ell(T)$ 
	containing (the leaves with) all the keys in $V_j$ that satisfies $|V_j'| \le 3 |V_j|$.
	Moreover, let $U$ result from $T$ by replacing $V_j$ by another binary tree over the same nodes.
	Then $U' = \ell(U)$ can be obtained from $T' = \ell(T)$ by only modifying $V_j'$ in $T'$.
\end{lemma}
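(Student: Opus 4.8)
The plan is to exploit that $\ell$ acts entirely locally. By construction (see \wref{fig:standard-to-leaf-oriented}), $\ell$ replaces every node $x$ of a standard BST by a \emph{gadget} $\ell(x)$ of at most three nodes, and which gadget is used depends only on $x$ and on how many of $x$'s child pointers are null (a single leaf if $x$ is a leaf, a single internal node if $x$ is unary, the three-node gadget of the figure if $x$ is binary). Consequently the node set of $T' = \ell(T)$ is partitioned into the gadgets $\{\ell(x) : x \in T\}$, and every child pointer of $\ell(x)$ either stays inside $\ell(x)$ or leads into the gadget of a child of $x$ in $T$.

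For the first claim I would take $V_j' := \bigcup_{x \in V_j} \ell(x)$. Since $V_j$ is connected and contains the root of $T$, and every gadget is connected and linked to the gadget of its parent, $V_j'$ is a connected subtree of $T'$ containing the root of $T'$, i.e.\ a top-subtree; it contains the leaf with key $x$ for every $x \in V_j$; and $|V_j'| = \sum_{x \in V_j} |\ell(x)| \le 3\,|V_j|$ because every gadget has size at most $3$. The boundary child pointers of $V_j'$ are exactly those pointers of the gadgets $\ell(x)$, $x \in V_j$, that pointed in $T$ to a child of $x$ lying outside $V_j$; each such pointer now leads to the root of $\ell(D)$ for the corresponding subtree $D$ of $T$ dangling off $V_j$.

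For the ``moreover'' part, let $D_1, \ldots, D_s$ (with $s \le |V_j| + 1$) be the subtrees of $T$ dangling off $V_j$. Forming $U$ by replacing $V_j$ with $W$ leaves each $D_i$ untouched and merely re-attaches them to the free child slots of $W$; in particular $T$ and $U$ agree on every node outside $V_j$. Because $\ell$ is local, it then produces verbatim the same leaf-oriented subtrees $\ell(D_1), \ldots, \ell(D_s)$ inside both $T'$ and $U'$, and the only differences between $T'$ and $U'$ are (i) the gadgets used for the nodes of $V_j$ and (ii) which of those gadgets' boundary pointers hook onto the roots of $\ell(D_1), \ldots, \ell(D_s)$. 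Hence $U'$ is obtained from $T'$ by deleting the interior region $V_j' = \bigcup_{x \in V_j} \ell_T(x)$, rebuilding it as $\bigcup_{x \in V_j} \ell_U(x)$ --- again of size at most $3\,|V_j|$, since this bound is independent of the shape of the tree --- and re-attaching the $s$ subtrees $\ell(D_1), \ldots, \ell(D_s)$; this is a modification confined to $V_j'$ together with its $O(|V_j|)$ boundary pointers.

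The one spot that needs a little care, and where I expect to spend a few extra words, is the interaction of the special cases of $\ell$ with the boundary of $V_j$: a node $x \in V_j$ may be, e.g., binary in $T$ but unary in $U$ (because a dangling subtree migrated to a different slot), so its gadget and internal wiring genuinely change. This does not threaten the argument --- the bound $|\ell(x)| \le 3$ holds in any case, and the $D_i$ themselves are never touched --- but it forces ``rebuilding $V_j'$'' to be phrased as replacing the entire gadget family of $V_j$ rather than editing the gadgets one at a time; checking that this still only touches $V_j'$ and its boundary pointers is the bulk of the routine bookkeeping.
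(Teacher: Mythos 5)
Your proposal is correct and takes the same approach as the paper, which treats the lemma as an immediate consequence of the locality of $\ell$ (each node of $T$ maps to a gadget of at most three nodes in $T'$, and gadgets are wired together exactly as the nodes of $T$ are). The paper offers essentially no further detail, so your elaboration --- defining $V_j' = \bigcup_{x\in V_j}\ell(x)$, checking it is a connected top-subtree of size at most $3|V_j|$ containing all leaves with keys in $V_j$, and observing that the subtrees dangling off $V_j$ map verbatim --- is a faithful unpacking of that one observation. Your flagged concern about a node's gadget type changing (binary in $T$ but unary in $U$, say) is real but, as you note, harmless: the lemma only requires that $U'$ differ from $T'$ inside $V_j'$ and in the wiring of the dangling pointers, not that the replacement region have the same node count, and $|\ell_U(x)|\le 3$ holds regardless of the new degree of $x$.
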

We can thus simulate a sequence of BST restructuring operations for $T$
in the leaf-oriented model by replacing $V_j'$ with the leaf-oriented tree
corresponding to the new top-subtree in $T$.
By \wref{lem:top-subtrees-bst-to-leaf-oriented}, 
a leaf-oriented BST can thus simulate any BST algorithm with a constant-factor overhead.

The inverse direction is a bit more tricky since not all leaf-oriented BSTs
can be translated back by a purely local transformation:
In a standard BST, we always need a key in the root,
whereas there are leaf-oriented tree with all leaves at depths $\Theta(\log n)$.
However, the following top-down procedure is sufficient for our purposes.
Given a leaf-oriented tree $T'$, we define the standard BST $T=s(T')$ recursively
(see also \wref{fig:leaf-oriented-to-standard}):
If $T'$ is a single leaf, create a single node with that key.
Otherwise, the root of $T'$ is an internal node (without a key).
Let $x$ be the key in the leftmost leaf of the right subtree of the root of $T'$.
$x$ will become the root of $T$ and the subtrees are translated recursively, with
the leftmost leaf in the right subtree removed

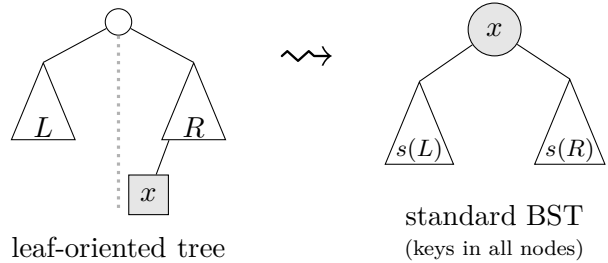
\begin{figure}[tbhp]
	\plaincenter{
	\begin{tikzpicture}
		\node[einner] (r) at (0,0) {} ;
		\node[subtree] (lL) at (-1,-1.4) {\makebox[1.3em]{$L$}} ;
		\node[subtree] (lR) at (1,-1.4) {\makebox[1.3em]{$R$}} ;
		\node[leaf] (x) at (.4,-2.3) {$x$} ;
		\draw (r) to ($(lL.north)-(0,.35pt)$);
		\draw (r) to ($(lR.north)-(0,.35pt)$);
		\draw (x) to ($(lR.left corner)+(3pt,0)$);
		\draw[very thick,dotted,black!30] (r) -- ++(0,-2.5); 
		\node at (0,-3) {leaf-oriented tree} ;

		\node[scale=2] at (2.5,-.5){$\rightsquigarrow$};
		
		\begin{scope}[shift={(5,-.1)}]
			\node[inner] (ox) at (0,0) {$x$};
			\node[subtree] (lR) at (1,-1.6) {\smaller$\!s(R)\!$} ;
			\node[subtree] (lL) at (-1,-1.6) {\smaller$\!s(L)\!$} ;
			\draw 
				(ox) to ($(lL.north)-(0,.35pt)$)
				(ox) to ($(lR.north)-(0,.35pt)$)
			;
		\end{scope}
		\node[align=center] at (5,-2.8) {standard BST\\[-.5ex]\textsmaller[2]{(keys in all nodes)}} ;
	\end{tikzpicture}
	}
	\caption{%
		Transformation from leaf-oriented trees to standard BSTs.
		$x$ is the leftmost leaf in the right subtree of the root,
		which is removed from the recursive call $s(R)$.
	}
	\label{fig:leaf-oriented-to-standard}
\end{figure}

\begin{lemma}
\label{lem:top-subtrees-leaf-oriented-to-bst}
	Let $V_j'$ be any top-subtree in $T'$, containing the set of (leaves with) keys $K_j$.
	Then, there is a top-subtree $V_j$ in $T=s(T')$ that contains all keys $K_j$ and satisfies
	$|V_j| \le |V_j'|$.
	Moreover, let $U'$ result from $T'$ by replacing $V_j'$ by another binary tree over the same nodes.
	Then $U = s(U')$ can be obtained from $T = s(T')$ by only modifying $V_j$ in $T$.
\end{lemma}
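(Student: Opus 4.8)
The plan is to prove both assertions at once by induction on $|T'|$, following how the recursive definition of $s$ acts on a top-subtree. I would strengthen the statement so the induction closes: for every top-subtree $V'$ of a leaf-oriented tree $T'$, the set $\sigma(V')$ of nodes of $s(T')$ that are produced \emph{before} $s$ descends into a hanging subtree of $V'$ is a top-subtree of $s(T')$ with $|\sigma(V')|\le|V'|$, it contains every key labelling a leaf of $T'$ that lies in $V'$, and both $\sigma(V')$ itself and the restriction of $s(T')$ to $\sigma(V')$ depend only on $V'$ together with, for each hanging slot of $V'$, the \emph{set} of keys occupying that slot (but not on the internal shape of the hanging subtrees). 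The first two properties give the stated size bound and the containment of $K_j$; the third is what drives the ``moreover'' part.

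For the inductive step, recall that if the root $r$ of $T'$ has left subtree $L$ and right subtree $R$, then the root of $s(T')$ is the leftmost leaf $\ell$ of $R$, the left subtree of $s(T')$ is $s(L)$, and the right subtree is $s(R\setminus\ell)$. Writing $V_L'=V'\cap L$ and $V_R'=V'\cap R$ (each a top-subtree of its side, possibly empty), the point that makes $\ell$ ``visible'' to $\sigma$ is that in a leaf-oriented BST the leaves are in key order, so $\ell$ is simply the minimum of a key-set that is determined by $V'$ alone --- whether or not $\ell\in V'$. I would then set $\sigma(V')=\{\ell\}\cup\sigma(V_L')\cup\sigma(\widetilde{V_R'})$, where $\widetilde{V_R'}$ is the top-subtree of $R\setminus\ell$ induced by $V_R'$, and verify by a short case analysis (on whether $\ell$ and its parent lie in $V_R'$) that the unary contraction caused by deleting $\ell$ never increases size, i.e.\ $|\widetilde{V_R'}|\le|V_R'|$; this yields $|\sigma(V')|\le 1+|V_L'|+|V_R'|=|V'|$. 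Containment of the needed keys follows because the contracted node is always a router, so no key is lost, and because any hanging-slot key promoted to a new root is added explicitly.

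For the ``moreover'' part I would observe that replacing $V'$ by another tree $W'$ on the same node set changes neither the collection of hanging subtrees nor the key-set in any hanging slot, since such a replacement in a leaf-oriented tree is order-preserving (leaf order is key order). By the strengthened hypothesis, $\sigma$ and the induced restriction of $s$ depend only on this data, hence $s(T')$ and $s(U')$ agree outside $\sigma(V')$ and $\sigma(W')$ respectively; because $V'$ and $W'$ have the same node set and the same hanging data, $\sigma(W')$ occupies exactly the position vacated by $\sigma(V')$, and taking $V_j:=\sigma(V')$ finishes the proof.

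I expect the main obstacle to be the bookkeeping around the $R\setminus\ell$ step: making precise what ``the top-subtree of $R\setminus\ell$ induced by $V_R'$'' is when $\ell$'s parent lies on the boundary of $V_R'$, and checking that the size bound, the key-containment, and the dependence-only-on-hanging-data statement all survive the contraction. Once that sub-lemma is in place, the remaining cases --- $T'$ a single leaf, $V'=\varnothing$, or $V_L'$ resp.\ $V_R'$ empty --- are routine instances of the same recursion.
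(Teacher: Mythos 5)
Your first-part argument (stop the recursion of $s$ at hanging subtrees, prove $|\sigma(V')|\le|V'|$ and $K_j\subseteq\sigma(V')$ by induction) is essentially the same construction the paper uses, and I believe it can be made to work. The gap is in the ``moreover'' part: your strengthened hypothesis says $\sigma(V')$ depends on $V'$ \emph{and} on the hanging-slot key-sets, so it is explicitly allowed to depend on the \emph{shape} of the top-subtree. The concluding sentence (``$\sigma(W')$ occupies exactly the position vacated by $\sigma(V')$, and taking $V_j:=\sigma(V')$ finishes the proof'') therefore does not follow: $V'$ and $W'$ have the same node set and the same hanging data, but different shapes, so nothing in your induction forces $\sigma(V')$ and $\sigma(W')$ to be the same set of BST nodes.

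Here is a concrete instance where this breaks. Let $T'$ have root $r$, whose left child is a leaf with key $1$ and whose right child is an internal node $q$ with children leaves $2$ and $3$. Take $V_j'=\{r,q,2\}$, so the hanging subtrees are $\{1\}$ (at $r$'s left) and $\{3\}$ (at $q$'s right), and $K_j=\{2\}$. Then $s(T')$ has root $2$ with children $1$ and $3$. Replacing $V_j'$ by the left-comb shape $W'$ (root $r$, left child $q$, $q$'s right child $2$; slots for $\{1\}$ at $q$'s left and $\{3\}$ at $r$'s right) gives $U'$, and $s(U')$ has root $3$, left child $2$, and $1$ as $2$'s left child. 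Unwinding your recursion on $V'$: the root of $s(T')$ is $\ell=2$, $V_L'=\varnothing$, and after deleting $\ell$ from $R$ the node $q$ is contracted away, so $\widetilde{V_R'}=\varnothing$; hence $\sigma(V')=\{2\}$. But $s(U')$ differs from $s(T')$ at node $3$ as well, and the only top-subtree replacement of the single node $\{2\}$ is trivial, so $s(U')$ is \emph{not} reachable from $s(T')$ by modifying $\sigma(V')$. Indeed $\sigma(W')=\{2,3\}\ne\sigma(V')$. The lemma itself is fine here with $V_j=\{2,3\}$ ($|V_j|=2\le 3=|V_j'|$), but $V_j:=\sigma(V')$ is strictly too small. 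To repair the argument you would need either to prove a stronger hypothesis -- that $\sigma$ is shape-\emph{independent} given the node set and the hanging-slot key-sets -- or to define $V_j$ as the union of $\sigma$ over all admissible shapes and redo the size bound for that larger set. The discrepancy enters precisely at the $R\setminus\ell$ bookkeeping you flagged: when $\ell$'s parent in $V_R'$ is contracted and its other child is a hanging subtree, the new root of $R\setminus\ell$ must be absorbed into the top-subtree, but ``the induced top-subtree $\widetilde{V_R'}$'' as written drops it.

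For comparison, the paper proves the ``moreover'' part not by your induction but by directly describing the hanging subtrees $S_i$ of $V_j$ in $T$ as images under $s$ of the hanging subtrees $S_i'$ of $V_j'$ (with a leftmost leaf deleted where appropriate) and observing these are unchanged under the replacement. The two strategies are genuinely different, but the same care around the deleted-leftmost-leaf case is required in both.
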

\begin{proof}
Consider applying $s$ to $T'$, but stopping the recursion whenever we reach a subtree
that does not contain any node from $V_j'$. The resulting tree $V_j$ will have at most $|V_j'|$ 
nodes since each recursive step removes at least one node of $V_j'$ from further consideration,
and adds one node to $V_j$. Moreover, all keys in $K_j$ are mapped to nodes with these keys,
so they are contained in $V_j$. This proves the first part of the claim.

For the second part, we first observe that removing $V_j'$ from $T'$ disconnects the tree, 
leaving a sequence of subtrees $S'_1,\ldots,S'_k$ behind.
Similarly, removing $V_j$ (as defined above) from $T$, leaves the subtrees $S_1,\ldots,S_k$
behind. The $S_i$ are obtained as follows. Unless all keys in $S_i'$ are smaller than all keys in $K_j$,
remove the leftmost leaf from $S_i'$. Then apply $s$ to the resulting tree.
Note that this leaves some $S_i$ empty when $S_i'$ consisted of a single leaf.
Since we only change $V_j'$ when going from $T'$ to $U'$, 
removing the transformed top-subtree in $U'$ yields the \emph{same} subtrees $S'_1,\ldots,S'_k$.
It follows that $s(U')$ can be obtained by starting at $T$ and only changing $V_j$, as claimed.
\end{proof}

We can hence also simulate any leaf-oriented tree algorithm in a standard BST, 
with at most the same cost.
This constant-overhead bi-simulation result means that 
(a) any constant-competitive
online algorithm for (standard) BSTs also yields such an algorithm for leaf-oriented BSTs,
and vice versa, and 
(b) any lower bounds for one model imply the same lower bound up to constant factors for the other model.

\end{document}